\algnewcommand{\Define}[1]{%
  \State \textbf{define:}
  \Statex \hspace*{\algorithmicindent}\parbox[t]{.8\linewidth}{\raggedright #1}
}
\algnewcommand{\Inputs}[1]{%
  \State \textbf{inputs:}
  \Statex \hspace*{\algorithmicindent}\parbox[t]{.8\linewidth}{\raggedright #1}
}
\algnewcommand{\Initialize}[1]{%
  \State \textbf{initialize:}
  \Statex \hspace*{\algorithmicindent}\parbox[t]{.8\linewidth}{\raggedright #1}
}
\newtheorem{lemma}{Lemma}
\newcommand{\textb}[1]{\textcolor{black}{#1}}
\newcommand{\blue}{\color{black}}
\renewcommand{\hat}{\widehat}
\renewcommand{\bar}{\overline}
\renewcommand{\vec}[1]{\boldsymbol{#1}}
\newcommand{\ovec}[1]{\bar{\boldsymbol{#1}}}
\newcommand{\hvec}[1]{\hat{\boldsymbol{#1}}}
\newcommand{\defn}{\triangleq}
\newcommand{\mat}[1]{\ensuremath{\begin{bmatrix}#1\end{bmatrix}}}
 \newcommand{\mc}[1]{\ensuremath{\mathcal{#1}}}
\newcommand{\Real}{{\mathbb{R}}}
\newcommand{\Complex}{{\mathbb{C}}}
\newcommand{\tran}{^{\text{\textsf{T}}}}
\newcommand{\herm}{^{\text{\textsf{H}}}}
\newcommand*\dif{\mathop{}\!\mathrm{d}} %differential in integrals
\newcommand{\zero}{\mathbf{0}}
\DeclareMathOperator{\real}{Re}
\DeclareMathOperator{\E}{\mathbb{E}}
\DeclareMathOperator{\Exp}{\mathbb{E}}
\DeclareMathOperator{\Cov}{Cov}
\DeclareMathOperator{\tr}{tr}
\DeclareMathOperator{\diag}{diag}
\DeclareMathOperator{\gdiag}{gdiag}
\DeclareMathOperator{\Diag}{Diag}
\DeclareMathOperator{\prox}{prox}
\DeclareMathOperator{\gprox}{gprox}
\DeclareMathOperator{\abs}{abs}
\renewcommand{\eqref}[1]{(\ref{eq:#1})}
\newcommand{\Figref}[1]{Figure~\ref{fig:#1}}
\newcommand{\figref}[1]{Fig.~\ref{fig:#1}}
\newcommand{\tabref}[1]{Table~\ref{tab:#1}}
\newcommand{\secref}[1]{Sec.~\ref{sec:#1}}
\newcommand{\appref}[1]{Appendix~\ref{app:#1}}
\renewcommand{\algref}[1]{Alg.~\ref{alg:#1}}
\newcommand{\lineref}[1]{line~\ref{line:#1}}
\newcommand{\iter}{t}
\newcommand{\iters}{{t+1}}
\newcommand{\itero}{{t-1}}
\newcommand{\mmse}{_{\text{\sf mmse}}}
\newcommand{\true}{_0}
\newcommand{\full}{_{\text{\sf full}}}
\newcommand{\post}{p_{\text{x|y}}}
\newcommand{\prior}{p_{\text{x}}}
\newcommand{\like}{\ell}
\newcommand{\JGibbs}{J_{\text{\sf Gibbs}}}
\newcommand{\xbf}{\vec{x}}
\newcommand{\xbfhat}{\hvec{x}}
\newcommand{\rbf}{\vec{r}}
\newcommand{\ybf}{\vec{y}}
\newcommand{\ebf}{\vec{e}}
\newcommand{\wbf}{\vec{w}}
\newcommand{\fbf}{\vec{f}}
\newcommand{\Vbf}{\vec{V}}
\newcommand{\dbf}{\vec{d}}
\newcommand{\Ibf}{\vec{I}}
\newcommand{\ubf}{\vec{u}}
\newcommand{\Lambdabf}{\vec{\Lambda}}
\newcommand{\Sigmabf}{\vec{\Sigma}}
\newcommand{\Cbf}{\vec{C}}
\newcommand{\Abf}{\vec{A}}
\begin{document}
\setlength{\arraycolsep}{0.5mm}

\title{Denoising Generalized Expectation-Consistent Approximation for \textb{MR} Image Recovery}
% \author{%
%   Saurav\! K.\! Shastri {\normalfont \&} Philip\! Schniter%
%   \thanks{This work was funded in part by the National Institutes of Health under grants R01HL135489 and R01EB029957, and by the National Science Foundation under grant CCF-1955587.}\\
%   \small Dept. ECE\\
%   \small Ohio State Univ., Columbus\\
%   \small \texttt{\{shastri.19,schniter.1\}@osu.edu}\\
%     \And
%   Rizwan Ahmad\\
%   \small Dept. BME\\
%   \small Ohio State Univ., Columbus\\
%   \small \texttt{rizwan.ahmad@osumc.edu}\\
%     \And
%   Christopher A. Metzler\\
%   \small Dept. CS\\
%   \small Univ. Maryland, College Park\\
%   \small \texttt{metzler@umd.edu} 
% %   \And
% % Philip Schniter \\
% % Dept. ECE\\
% % The Ohio State Univ., Columbus\\
% % \texttt{schniter.1@osu.edu}\\
% }

\author{Saurav K. Shastri, Rizwan Ahmad, Christopher A. Metzler, and Philip Schniter%
\thanks{S. K. Shastri and P. Schniter are with the Dept. of Electrical and Computer Engineering, The Ohio State University, Columbus, OH 43201, USA, Email: \{shastri.19,schniter.1\}@osu.edu.}%
\thanks{R. Ahmad is with the Dept. of Biomedical Engineering, The Ohio State University, Columbus, OH 43201, USA, Email: ahmad.46@osu.edu}%
\thanks{C. A. Metzler is with the Dept. of Computer Science, The University of Maryland, College Park, MD 20742, USA, Email: metzler@umd.edu.}%
}

\maketitle

%%%%%%%%%%%%%%%%%%%%%%%%%%%%%%%%%%%%%%%%%%%%%%%%%%%%%%%%%%%%%%%%%%%%%%%%%%%%%%%%

\begin{abstract}
To solve inverse problems, plug-and-play (PnP) methods replace the proximal step in a convex optimization algorithm with a call to an application-specific denoiser, often implemented using a deep neural network (DNN).
Although such methods \textb{yield accurate solutions}, they can be improved.
For example, denoisers are usually designed/trained to remove white Gaussian noise, but the denoiser input error in PnP algorithms is usually far from white or Gaussian. 
Approximate message passing (AMP) methods provide white and Gaussian denoiser input error, but only when the forward operator is \textb{sufficiently random}.
In this work, for Fourier-based forward operators, we propose a PnP algorithm based on generalized expectation-consistent (GEC) approximation---a close cousin of AMP---that offers predictable error statistics at each iteration, as well as a new DNN denoiser that leverages those statistics.
We apply our approach to magnetic resonance \textb{(MR)} image recovery and demonstrate its advantages over existing PnP and AMP methods. 
\end{abstract}

%%%%%%%%%%%%%%%%%%%%%%%%%%%%%%%%%%%%%%%%%%%%%%%%%%%%%%%%%%%%%%%%%%%%%%%%%%%%%%%%

\section{Introduction}

When solving a linear inverse problem, we aim to recover a signal $\vec{x}\true\in\Complex^N$ from measurements $\vec{y}\in\Complex^P$ of the form
\begin{align}
\vec{y} = \vec{Ax}\true+\vec{w}
\label{eq:y},
\end{align}
where $\vec{A}$ is a known linear operator and $\vec{w}$ is unknown noise.
Well-known examples of linear inverse problems include
deblurring \cite{Wang:14};
super-resolution \cite{Park:SPM:03,Yang:TM:19};
inpainting \cite{Guillemot:SPM:13};
image recovery in magnetic resonance imaging (MRI) \cite{Knoll:SPM:20};
computed tomography \cite{Unser:FTSP:19};
holography \cite{Soulez:JOSAA:07};
and 
decoding in communications \cite{Venkataramanan:FTCIT:19}. 
Importantly, when $\vec{A}$ is not full column rank (e.g., when $P<N$), the measurements $\vec{y}$ can be explained well by many different hypotheses of $\vec{x}\true$.
In such cases, it is essential to harness prior knowledge of $\vec{x}\true$ when solving the inverse problem.
%In many cases, one has some knowledge of the distribution of signals from which $\vec{x}\true$ is drawn, such as from a set of examples $\{\vec{x}_i\}$.
%If harnessed, such knowledge can be extremely beneficial to signal recovery.
%In fact, in problems where $\vec{A}$ is not full column rank, such knowledge is essential because many hypotheses of $\vec{x}\true$ explain the observed $\vec{y}$, even in the absence of noise.

The traditional approach \cite{Fessler:SPM:20} to recovering $\vec{x}\true$ from $\vec{y}$ in \eqref{y} is to solve an optimization problem like 
%\cite{Lustig:SPM:08}
\begin{align}
\hvec{x} 
&= \arg\min_{\vec{x}}
\big\{ g_1(\vec{x}) + g_2(\vec{x}) \big\}
\label{eq:opt},
\end{align}
where $g_1(\vec{x})$ promotes measurement fidelity and the regularization $g_2(\vec{x})$ encourages consistency with the prior information about $\vec{x}\true$.
For example, if $\vec{w}$ is white Gaussian noise (WGN) with precision (i.e., inverse variance) $\gamma_w$, then $g_1(\vec{x}) = \tfrac{\gamma_w}{2}\|\vec{y} - \vec{Ax}\|^2$ is an appropriate choice. 
Choosing a good regularizer $g_2$ is much more difficult.
A common choice is to construct $g_2$ so that $\vec{x}\true$ is sparse in some transform domain, i.e., $g_2(\vec{x}) = \lambda\|\vec{\Psi x}\|_1$ for $\lambda>0$ and a suitable linear operator $\vec{\Psi}$.
A famous example of this choice is total variation regularization \cite{Rudin:PhyD:92} and in particular its anisotropic variant (e.g., \cite{Shi:JAM:13}). 
However, the intricacies of many real-world signal classes (e.g., natural images) are not well captured by sparse models like these. 
%But it is difficult to hand-craft a regularization term $g_2$ that captures the intricacies of the signal class (e.g., natural images) while being tractable to optimize.
Even so, these traditional methods provide useful building blocks for contemporary methods, as we describe below.
We will discuss the algorithmic aspects of solving \eqref{opt} in \secref{back}.

More recently, there has been a focus on training deep neural networks (DNNs) for image recovery given a sufficiently large set of examples $\{(\vec{x}_i,\vec{y}_i)\}$ to train those networks.
These DNN-based approaches come in many forms, including
dealiasing approaches \cite{Jin:TIP:17,Yang:TMI:17}, which use a convolutional DNN to recover $\vec{x}\true$ from $\vec{A}\herm\vec{y}$ or $\vec{A}^+\vec{y}$, where $(\cdot)^+$ denotes the pseudo-inverse;
unrolled approaches \cite{Hammernik:MRM:18,Monga:SPM:21}, which unroll the iterations of an optimization algorithm into a neural network and then learn the network parameters that yield the best result after a fixed number of iterations;
and inverse GAN approaches
%compressed sensing with a generative model (CSGM)
\cite{Bora:ICML:17,Hand:COLT:18}, which first use a generative adversarial network (GAN) formulation to train a DNN to turn random code vectors $\vec{z}$ into realistic signal samples $\vec{x}$, and then search for the specific $\vec{z}$ that yields the $\hvec{x}$ for which $\|\vec{A}\hvec{x}-\vec{y}\|$ is minimal.
%deep image prior (DIP) \cite{Ulyanov:CVPR:18,Heckel:ICLR:19}, which optimizes the parameters $\vec{\theta}$ of the generator $G(\vec{z};\vec{\theta})$ for a fixed $\vec{z}$ until the output $\vec{Ax}$ best matches $\vec{y}$. 
Good overviews of these methods can be found in \cite{Arridge:AN:19,Ongie:JSAIT:20,Hammernik:22}.
Although the aforementioned DNN-based methods have shown promise, they require large training datasets, which may be unavailable in some applications. 
Also, models trained under particular assumptions about $\vec{A}$ and/or statistics of $\vec{w}$ may not generalize well to test scenarios with different $\vec{A}$ and/or $\vec{w}$.

So-called ``plug-and-play'' (PnP) approaches \cite{Venkatakrishnan:GSIP:13} give a middle-ground between traditional algorithmic approaches and the DNN-based approaches discussed above.
In PnP, a DNN is first trained as a signal denoiser, and later that denoiser is used to replace the proximal step in an iterative optimization algorithm (see \secref{pnp}).
One advantage of this approach is that the denoiser can be trained with relatively few examples of $\{\vec{x}_i\}$ (e.g., using only signal patches rather than the full signal) and no examples of $\{\vec{y}_i\}$.
Also, because the denoiser is trained on signal examples alone, PnP methods have no trouble generalizing to an arbitrary $\vec{A}$ and/or $\vec{w}$ at test time.
The regularization-by-denoising (RED) \cite{Romano:JIS:17,Reehorst:TCI:19} framework yields a related class of algorithms with similar properties.
See \cite{Ahmad:SPM:20} for a comprehensive overview of PnP and RED.

With a well-designed DNN denoiser, PnP and RED significantly outperform sparsity-based approaches, as well as end-to-end DNNs in limited-data and mismatched-$\vec{A}$ scenarios (see, e.g., \cite{Ahmad:SPM:20}).
However, there is room for improvement.
For example, while the denoisers used in PnP and RED are typically trained to remove the effects of additive WGN (AWGN), PnP and RED algorithms
%provide the denoiser with an AWGN-corrupted input at each iteration.
yield estimation errors that are not white nor Gaussian at each iteration.
%Rather, the statistics of their estimation errors are difficult to characterize and change with each iteration, and 
As a result, AWGN-trained denoisers will be mismatched at every iteration, thus requiring more iterations and compromising performance at the fixed point.
Although recent work \cite{Gilton:TCI:21} has shown that deep equilibrium methods can be used to train the denoiser at the algorithm's fixed point, the denoiser may still remain mismatched for the many iterations that it takes to reach that fixed point, and the final design will be dependent on the $\vec{A}$ and noise statistics used during training.

These shortcomings of PnP algorithms motivate the following two questions: 
\begin{enumerate}
\item
Is it possible to construct a PnP-style algorithm that presents the denoiser with predictable error statistics at every iteration?
\item
Is it possible to construct a DNN denoiser that can efficiently leverage those error statistics?
\end{enumerate}
When $\vec{A}$ is a large unitarily invariant random matrix, the answers are well-known to be ``yes": approximate message passing (AMP) algorithms \cite{Donoho:PNAS:09} yield AWGN errors at each iteration with a known variance, which facilitates the use of WGN-trained DNN denoisers like DnCNN \cite{Zhang:TIP:17} (see \secref{pnp} for more on AMP algorithms).
In many inverse problems, however, $\vec{A}$ 
\textb{is either non-random or drawn from a distribution under which AMP algorithms do not behave as intended.}
%In fact, for Fourier-based $\vec{A}$ and natural images $\vec{x}\true$ like those encountered in MRI, multiplying $\vec{x}\true$ by $\vec{A}$ will concentrate the energy at low frequencies, which has the opposite of a randomizing effect.
So, the above two questions still stand.

In this paper, we answer both of the above questions in the affirmative for Fourier-based $\vec{A}$.
Using the framework of generalized expectation-consistent (GEC) approximation \cite{Fletcher:ISIT:16} in the wavelet domain \cite{Mallat:Book:08}, we propose a PnP algorithm that yields an AWGN error in each wavelet subband, with a predictable variance, at each iteration.
We then propose a new DNN denoiser design that can exploit knowledge of the wavelet-domain error spectrum. 
For recovery of MR images from the fastMRI \cite{Zbontar:18} and Stanford 2D FSE \cite{Ong:ISMRM:18} datasets, we present experimental results that show the advantages of our proposed approach over existing PnP and AMP-based approaches.
This paper builds on our recent conference publication \cite{Shastri:ICASSP:22} but adds our new denoiser design, much more background material and detailed explanations, and many new experimental results.

%%%%%%%%%%%%%%%%%%%%%%%%%%%%%%%%%%%%%%%%%%%%%%%%%%%%%%%%%%%%%%%%%%%%%%%%%%%%%%%%

\section{Background} \label{sec:back}

\subsection{Magnetic resonance imaging} \label{sec:mri}

We now detail the version of the system model \eqref{y} that manifests in $C$-coil MRI.
There, $\vec{x}\true\in\Complex^N$ is a vectorized version of the $N$-pixel image that we wish to recover, $\vec{y}\in\Complex^{CM}$ are the so-called ``k-space'' measurements, and 
\begin{align}
\vec{A}=\mat{\vec{MF}\Diag(\vec{s}_1)\\[-2mm]\vdots\\[-2mm]\vec{MF}\Diag(\vec{s}_C)}
\label{eq:A}.
\end{align}
In \eqref{A},
$\vec{F}\in\Complex^{N\times N}$ is a unitary 2D discrete Fourier transform (DFT),
$\vec{M}\in\Real^{M\times N}$ is a sampling mask formed from $M$ rows of the identity matrix $\vec{I}\in\Real^{N\times N}$, and
$\vec{s}_c\in\Complex^{N}$ is the $c$th coil-sensitivity map. 
In the special case of single-coil MRI, we have $C=1$ and $\vec{s}_1=\vec{1}$, where $\vec{1}$ denotes the all-ones vector.
In MRI, the ratio $R\defn N/M$ is known as the ``acceleration rate.''
When $R>1$, the matrix $\vec{A}$ 
%is typically not full column rank
\textb{can be column-rank deficient and/or poorly conditioned}
even when $C\geq R$, and so prior knowledge of $\vec{x}\true$ must be exploited for accurate recovery.

\blue
In practical MRI, physical constraints govern the construction of the sampling mask $\vec{M}$.
For example, samples are always collected along lines or curves in k-space.
In clinical practice, it is most common to sample along lines parallel to one dimension of k-space, as illustrated in Figs.~\ref{fig:mask}(c)-(d) for 2D sampling. 
We will refer to this approach as ``2D line sampling.''
In this case, one dimension of k-space is fully sampled and the other dimension is subsampled.
For the subsampled dimension, it is common to sample pseudorandomly or randomly, but with a higher density near the k-space origin, as shown in Figs.~\ref{fig:mask}(c)-(d). 
Also, when using ESPIRiT to estimate the coil-sensitivity maps $\{\vec{s}_c\}$, one must include a fully-sampled ``autocalibration'' region centered at the origin, as shown in Figs.~\ref{fig:mask}(b)-(d).

2D line sampling, while attractive from an implementation standpoint, poses challenges for signal reconstruction due to high levels of coherence \cite{Foucart:Book:13} in the resulting $\vec{A}$ matrix. %\cite{Candes:TIT:06}.
This has led some algorithm designers to consider ``2D point sampling'' masks such as those shown in \figref{mask}(a)-(b), since they yield $\vec{A}$ with much lower coherence \cite{Lustig:MRM:07}.
%But such masks are rarely encountered in practical MRI, because they require a much more time-consuming three-step acquisition process:
%i) acquire a k-space volume using 3D line sampling, ii) perform an inverse DFT along the fully sampled dimension, and iii) slice along that dimension to obtain a set of 2D k-space acquisitions.
But such masks are rarely encountered in practical 2D MR imaging.
It is, however, possible to encounter a 2D point mask as a byproduct of the following 3D acquisition process: i) acquire a 3D k-space volume using 3D line sampling, ii) perform an inverse DFT along the fully sampled dimension, and iii) slice along that dimension to obtain a stack of 2D k-space acquisitions. 
The location of each line in 3D k-space determines the location of the respective point sample in 2D k-space, and these locations can be freely chosen. 
But 3D acquisition is uncommon because it is susceptible to motion; in 2D acquisition, the patient must lie still for the acquisition of a single slice, whereas in 3D acquisition the patient must lie still for the acquisition of an entire volume.
We include experiments with 2D point masks only to compare with the VDAMP family of algorithms \cite{Millard:OJSP:20,Metzler:ICASSP:21,Millard:22,Millard:ISMRM:22} discussed in the sequel, since these algorithms are all designed around the use of 2D point masks.
\color{black}

Although our paper focuses on MRI, the methods we propose apply to any application where the goal is to recover a signal from undersampled Fourier measurements. 

\begin{figure}
    \centering
    \includegraphics[width = \linewidth]{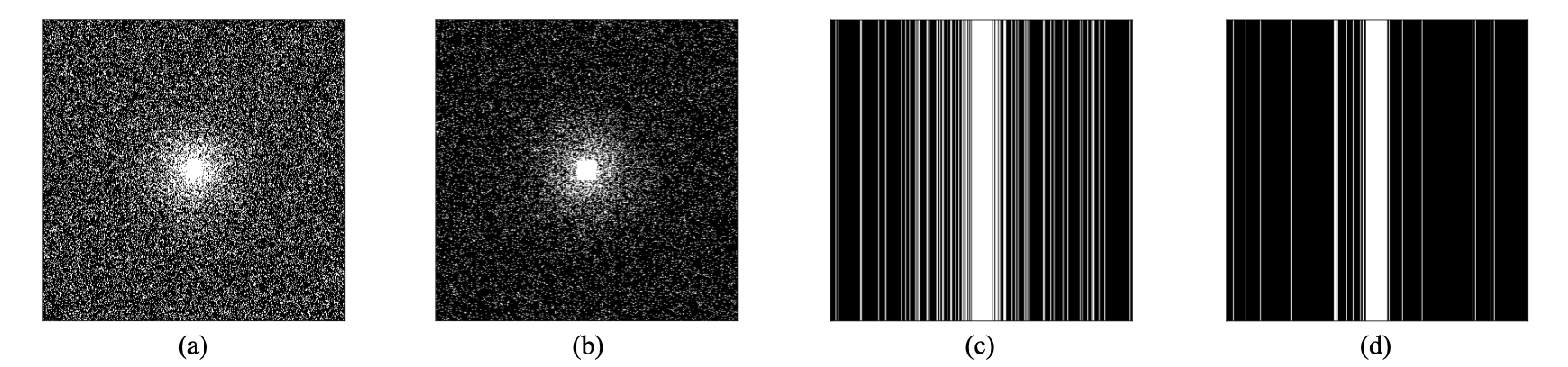}
    \caption{\textb{Examples of sampling masks $\vec{M}$: (a) 2D point sampling at $R=4$, (b) 2D point sampling at $R=8$ with a $24\times 24$ fully sampled central autocalibration region, (c) 2D line sampling at $R=4$ with a $24$-wide fully sampled central autocalibration region, and (d) 2D line sampling at $R=8$ with a $24$-wide fully sampled central autocalibration region}.}
    \label{fig:mask}
\end{figure}

% \begin{figure}[t]
% %\boxed{
%  \centering
%  \begin{minipage}[t]{0.49\linewidth}
%  \centering
%   \includegraphics[width = 0.8\linewidth]{}
%   \caption{A variable density sampling mask $\vec{M}$ for R $= 4$.}
%   \label{fig:mask_R4}
%  \end{minipage}
%  \hfill
%  \begin{minipage}[t]{0.49\linewidth}
%  \centering
%   % QQ plots
%   \includegraphics[width = 0.8\linewidth]{}
%  \caption{A variable density sampling mask $\vec{M}$ for R $= 8$ with a $24\times 24$ fully sampled central autocalibration region.}
%   \label{fig:mask_R8_with AC}
%  \end{minipage}
% %}
% \end{figure}

% \begin{figure}
%     \centering
%     \includegraphics[width = 0.2\linewidth]{figures/multi_coil_JSAIT/mask.pdf}
%     \caption{A variable density sampling mask $\vec{M}$ for R $= 8$ with a $24\times 24$ fully sampled central autocalibration region.}
%     \label{fig:mask_R8}
% \end{figure}

\subsection{Plug-and-play recovery} \label{sec:pnp}

%\begin{algorithm}[t]
%\caption{ADMM}
%\label{alg:admm}
%\begin{algorithmic}[1]
%\Require{$g_1(\cdot), g_2(\cdot) \text{ and } \gamma$.}
%\State{Select initial $\vec{x}_2,\vec{u}$}
%\Repeat
%    %\State{// Measurement fidelity}
%    \State{$\vec{x}_1 \gets \prox_{\gamma^{-1}g_1}(\vec{x}_2-\vec{u})$} \label{line:admm_loss}
%    %\State{// Denoising}
%    \State{$\vec{x}_2 \gets \prox_{\gamma^{-1}g_2}(\vec{x}_1+\vec{u})$} \label{line:admm_prox}
%    %\State{// Dual update}
%    \State{$\vec{u} \gets \vec{u} + \vec{x}_1 - \vec{x}_2$} \label{line:admm_dual}
%\Until{Terminated}
%\end{algorithmic}
%\end{algorithm}

%We now discuss algorithmic approaches to solving the optimization problem \eqref{opt}.
%As we will see, they are closely related to PnP methods, AMP methods, and the GEC approach that we propose.
Many algorithms have been proposed to solve the optimization problem \eqref{opt} (see, e.g., \cite{Fessler:SPM:20}).
The typical assumptions are that $g_1$ is convex and differentiable, $\nabla g_1$ is Lipschitz with constant $L>0$, and $g_2$ is convex but possibly not differentiable, which allows sparsity-inducing regularizations like $g_2(\vec{x})=\lambda\|\vec{\Psi x}\|_1$.
One of the most popular approaches is ADMM \cite{Boyd:FTML:11}, summarized by the iterations 
\begin{subequations}\label{eq:admm}
\begin{align}
\vec{x}_1 &\gets \prox_{\gamma^{-1}g_1}(\vec{x}_2-\vec{u})
\qquad
\label{eq:admm_loss}\\
\vec{x}_2 &\gets \prox_{\gamma^{-1}g_2}(\vec{x}_1+\vec{u})
\label{eq:admm_prox}\\
\vec{u} &\gets \vec{u} + \left(\vec{x}_1 - \vec{x}_2\right)
\label{eq:admm_dual},
\end{align}
\end{subequations}
where $\gamma$ is a tunable parameter\footnote{
The parameter $\gamma$ arises from the augmented Lagrangian used by ADMM: $g_1(\vec{x}_1)+g_2(\vec{x}_2)+\real\{\vec{u}\herm(\vec{x}_1-\vec{x}_2)\} + \frac{\gamma}{2}\|\vec{x}_1-\vec{x}_2\|^2$.} 
that affects convergence speed but not the fixed point, and
\begin{align}
\prox_{\rho}(\vec{r})
\defn \arg\min_{\vec{x}} \big\{ \rho(\vec{x}) + \tfrac{1}{2}\|\vec{x}-\vec{r}\|^2 \big\}
\label{eq:prox}.
\end{align}
For example, when $g_1(\vec{x})=\frac{\gamma_w}{2}\|\vec{Ax}-\vec{y}\|^2$, we get
\begin{align}
\prox_{\gamma^{-1}g_1}(\vec{r})
= \big(\gamma_w\vec{A}\herm\vec{A}+\gamma\vec{I}\big)^{-1}\big(\gamma_w\vec{A}\herm\vec{y} + \gamma\vec{r}\big)
\label{eq:prox_L2}.
\end{align}
Based on the prox definition in \eqref{prox}, ADMM step \eqref{admm_prox}
%\lineref{admm_prox} of \algref{admm} 
can be interpreted as MAP estimation \cite{Poor:Book:94} of $\vec{x}\true$ with prior $p(\vec{x}\true)\propto e^{-g_2(\vec{x}\true)}$ from an observation $\vec{r}=\vec{x}\true+\vec{e}$ of the true signal corrupted by $\gamma$-precision AWGN $\vec{e}$, i.e., MAP denoising.
This observation led Venkatakrishnan et al.\ \cite{Venkatakrishnan:GSIP:13} to propose that the prox in \eqref{admm_prox}
%\lineref{admm_prox} of \algref{admm} 
be replaced by a high-performance image denoiser $\vec{f}_2:\Real^N\rightarrow\Real^N$ like BM3D \cite{Dabov:TIP:07}, giving rise to PnP-ADMM. 
It was later proposed to use a DNN-based denoiser in PnP \cite{Meinhardt:ICCV:17}, such as
DnCNN~\cite{Zhang:TIP:17}.
Note that when \eqref{admm_prox} is replaced with a denoising step of the form ``$\vec{x}_2\gets\vec{f}_2(\vec{x}_1+\vec{u})$,'' the parameter $\gamma$ does affect the fixed-point and thus must be tuned to obtain the best recovery accuracy.

The PnP framework was later extended to other algorithms, such as primal-dual splitting (PDS) in \cite{Ono:SPL:17,Meinhardt:ICCV:17} and proximal gradient descent (PGD) in \cite{Kamilov:SPL:17,Meinhardt:ICCV:17}.
For use in the sequel, we write the PGD algorithm as
\begin{subequations}\label{eq:pgd}
\begin{align}
\vec{x}_1 &\gets \vec{x}_2 - \mu \nabla g_1(\vec{x}_2)
\label{eq:pgd_grad}\\
\vec{x}_2 &\gets \prox_{\mu g_2}(\vec{x}_1)
\label{eq:pgd_prox},
\end{align}
\end{subequations}
where $\mu\in(0,1/L)$ and $L$ is the Lipschitz constant of $\nabla g_1$.
For example, when $g_1(\vec{x})=\frac{1}{2}\|\vec{Ax}-\vec{y}\|^2$, we get $\nabla g_1(\vec{x}) = \vec{A}\herm(\vec{Ax}-\vec{y})$.
For all of these PnP incarnations, the prox step in the original optimization algorithm is replaced by a high-performance denoiser $\vec{f}_2$.
As shown in the recent overview \cite{Ahmad:SPM:20}, PnP methods 
have been shown to significantly outperform sparsity-based approaches in MRI, 
%\textb{tend to significantly outperform sparsity-based approaches assuming sufficient training data }
as well as end-to-end DNNs in limited-data and mismatched-$\vec{A}$ scenarios.

Although PnP algorithms work well for MRI, there is room for improvement.  
For example, while image denoisers are typically designed/trained to remove the effects of AWGN, PnP algorithms do not provide the denoiser with an AWGN-corrupted input at each iteration.
Rather, the denoiser's input error has iteration-dependent statistics that are difficult to analyze or predict.
%And although deep equilibrium methods \cite{Gilton:TCI:21} can be used to train the denoiser for optimality at the algorithm's fixed point, the denoiser will remain mismatched for the many iterations that it takes for the algorithm to reach its fixed point, and the optimality will hold only for the $\vec{A}$ and noise statistics used during training.

\subsection{Approximate message passing} \label{sec:amp}

For the model \eqref{y} with $\vec{w}\sim\mc{N}(0,\tau_w\vec{I})$, the AMP algorithm\footnote{
For generalized linear models, one would instead use the Generalized AMP algorithm from \cite{Rangan:ISIT:11}.}
\cite{Donoho:PNAS:09,Donoho:ITW:10a} manifests as the following iteration over $t=0,1,2,\dots$:%
\begin{subequations}\label{eq:amp}
\begin{eqnarray}
\vec{v}^\iters 
&=& \beta\cdot\big( \vec{y}-\vec{Ax}^\iter \big)
        + \tfrac{1}{M}\vec{v}^\iter \tr\{\nabla\vec{f}_2^\iter(\vec{x}^\itero\!+\!\beta\vec{A}\herm\vec{v}^\iter)\}
\qquad
\label{eq:amp_onsager}\\
\tau^\iters
&=& \tfrac{1}{M}\|\vec{v}^\iters\|^2
\label{eq:amp_noisevar}\\
%\vec{r}^\iters 
%&=& \vec{x}^\iter + \vec{A}\herm\vec{v}^\iters \\
\vec{x}^\iters
%&=\vec{f}_2^\iters(\vec{r}^\iters)
&=&\vec{f}_2^\iters(\vec{x}^\iter + \beta\vec{A}\herm\vec{v}^\iters)
\end{eqnarray}
\end{subequations}
initialized as $\vec{v}^0=\vec{0}=\vec{x}^0$, where 
$\vec{f}_2^\iter(\cdot)$ is the iteration-$t$ denoising function \textb{(which may depend on $\tau^t$)},
$\tr\{\nabla\vec{f}_2^\iter(\vec{r})\}$ is the trace of the Jacobian of $\vec{f}_2^\iter$ at $\vec{r}$,
and $\beta=\sqrt{N}/\|\vec{A}\|_F$.
The last term in \eqref{amp_onsager}, known as the ``Onsager correction,'' is a key component of the AMP algorithm.
Without it, \eqref{amp} would reduce to the PnP version of the PGD algorithm \eqref{pgd} with $\mu=\beta^2$.

The goal of Onsager correction is to make the denoiser input error
\begin{align}
\vec{e}^\iters
&\defn \vec{x}^\iter + \beta\vec{A}\herm\vec{v}^\iters - \vec{x}\true
\label{eq:e}
\end{align}
behave like a realization of WGN with variance $\tau^\iters$, where $\tau^\iters$ is given in \eqref{amp_noisevar}.
Note that if 
\begin{align}
\vec{e}^\iters
\sim\mc{N}(\vec{0},\tau^\iters\vec{I})
\label{eq:e_amp}
\end{align}
did hold, it would be straightforward to design the denoiser $\vec{f}_2^\iters$ for MAP or MMSE optimality.
For example, in \eqref{opt}, if we interpret $g_1(\vec{x})$ as the log-likelihood and $g_2(\vec{x})$ as the log-prior, then $g_1(\vec{x})+g_2(\vec{x})$ becomes the log-posterior (up to a constant) and so $\hvec{x}$ in \eqref{opt} becomes the MAP estimate \cite{Poor:Book:98}.
Thus, for the case of MAP estimation, we would use the MAP denoiser  $\vec{f}_2^\iter(\vec{r})=\prox_{\tau^\iter g_2}(\vec{r})$,
and $\vec{x}^\iter$ would approach the MAP estimate as $t\rightarrow\infty$ \cite{Donoho:PNAS:09}.
On the other hand, for the case of MMSE estimation, where we would like to compute the conditional mean $\hvec{x}\mmse \defn \E\{\vec{x}|\vec{y}\}$, we would use the MMSE denoiser $\vec{f}_2^\iter(\vec{r})=\E\{\vec{x}\,|\,\vec{r}\}$ for $\vec{r}=\vec{x}\true+\vec{e}$ with $\vec{e}\sim\mc{N}(\vec{0},\tau^\iter\vec{I})$ \cite{Donoho:ITW:10a}.

Importantly, when the forward operator $\vec{A}\in\Real^{P\times N}$ \textb{is} i.i.d.\ sub-Gaussian, the dimensions $P,N\rightarrow\infty$ with a fixed ratio $P/N$, and $\vec{f}_2^\iter$ is Lipschitz, \cite{Bayati:TIT:11,Berthier:II:19} established that the WGN property \eqref{e_amp} does indeed hold. 
Furthermore, defining the MSE $\mc{E}^\iter\defn \frac{1}{N}\|\vec{x}^\iter-\vec{x}\true\|^2$, \cite{Bayati:TIT:11,Berthier:II:19} established that AMP obeys the following scalar state-evolution over $t=0,1,2,\dots$:%
\begin{subequations}\label{eq:amp_se}
\begin{align}
\tau^\iter 
&= \tau_w + \tfrac{N}{P} \mc{E}^\iter 
\label{eq:ampse_tau}\\
\mc{E}^\iters
&= \tfrac{1}{N}\E\{\|\vec{f}_2^\iter(\vec{x}\true+\mc{N}(\vec{0},\tau^\iter\vec{I}))-\vec{x}\true\|^2\}
\label{eq:ampse_E}.
\end{align}
\end{subequations}
\textb{Remarkably, the AMP state evolution shows that, in the large-system limit, the trajectory of the mean-squared recovery error can be predicted in advance knowing only the dimensions of i.i.d.\ sub-Gaussian $\vec{A}$ (not the values in $\vec{A}$) and the MSE behavior of the denoiser $\vec{f}_2^t(\cdot)$ when faced with the task of removing white Gaussian noise.}
Moreover, when $\vec{f}_2^t$ is the MMSE denoiser and the state-evolution has a unique fixed point, \cite{Bayati:TIT:11,Berthier:II:19} established that AMP provably converges to the MMSE-optimal estimate $\hvec{x}\mmse$.
These theoretical results were first established for separable denoisers $\vec{f}_2$ in \cite{Bayati:TIT:11} and later extended to non-separable denoisers in \cite{Berthier:II:19}.
By ``separable'' we mean that $\vec{f}_2$ takes the form $\vec{f}_2(\vec{x})=[f_{2}(x_1),\dots,f_{2}(x_N)]\tran$ for some scalar denoiser $f_2:\Real\rightarrow\Real$. 

For practical image recovery problems, \cite{Metzler:ICIP:15} proposed to approximate the MMSE denoiser by a high-performance image denoiser like BM3D or a DNN, and called it ``denoising-AMP'' (D-AMP).
Since these image denoisers are non-separable and high-dimensional, the trace-Jacobian term in \eqref{amp_onsager} (known as the ``divergence'') is difficult to compute, and so D-AMP uses the Monte-Carlo approximation \cite{Ramani:TIP:08}
\begin{align}
\tr\{\nabla\vec{f}_2^\iter(\vec{r})\}
\approx \delta^{-1}\vec{q}\herm\big[\vec{f}_2^\iter(\vec{r}+\delta\vec{q})-\vec{f}_2^\iter(\vec{r)}\big]
\label{eq:trJfapprox} ,
\end{align}
where $\vec{q}$ is a fixed realization of $\mc{N}(\vec{0},\vec{I})$ and $\delta$ is a small positive number.
D-AMP performs very well with large i.i.d.\ sub-Gaussian $\vec{A}$, but can diverge with non-random $\vec{A}$, such as those encountered in MRI (recall \eqref{A}).
%Some intuition about AMP's issues with non-random $\vec{A}$ is provided in \appref{amp}.

\subsection{Expectation-consistent approximation and VAMP} \label{sec:ec}

Expectation-consistent (EC) approximation \cite{Opper:NIPS:05} is an inference framework with close connections to both PnP-ADMM and AMP.
In EC, one is assumed to have access to the prior density $\prior(\vec{x})$ on $\vec{x}\true$ and the likelihood function $\like(\vec{x};\vec{y})$, and the goal is to approximate the mean of the posterior $\post(\vec{x}|\vec{y})$, i.e., the MMSE estimate $\hvec{x}\mmse=\E\{\vec{x}|\vec{y}\}$.
%(To connect back to \eqref{y}, we would have $\prior(\vec{x})\propto e^{-g_2(\vec{x})}$ and $\like(\vec{x};\vec{y})\propto e^{-g_1(\vec{x})}$.)
Although Bayes rule says that $\post(\vec{x}|\vec{y}) = Z^{-1}(\vec{y}) \prior(\vec{x})\like(\vec{x};\vec{y})$ for $Z(\vec{y})\defn\int \prior(\vec{x})\like(\vec{x};\vec{y}) \dif\vec{x}$, this integral is usually too difficult to compute in the high-dimensional case.
But note that we can write
\begin{align}
\post(\vec{x}|\vec{y})
&= \arg\min_{q} D\big(q(\vec{x})\big\|\post(\vec{x}|\vec{y})\big) \\
&= \arg\min_{q} 
    D\big(q(\vec{x})\big\|\like(\vec{x};\vec{y})\big)
    + D\big(q(\vec{x})\big\|\prior(\vec{x})\big)
    + H\big(q(\vec{x})\big) \\
&= \arg\min_{q_1,q_2,q_3} 
    \underbrace{
    D\big(q_1(\vec{x})\big\|\like(\vec{x};\vec{y})\big)
    + D\big(q_2(\vec{x})\big\|\prior(\vec{x})\big)
    + H\big(q_3(\vec{x})\big) 
    }_{\displaystyle \defn \JGibbs(q_1,q_2,q_3)}
    \text{~such that~} q_1=q_2=q_3
\label{eq:JGibbs},
\end{align}
where the minimizations are conducted over sets of probability densities,
$D(q_1\|\prior)\defn\int q_1(\vec{x})\log \frac{q_1(\vec{x})}{\prior(\vec{x})} \dif\vec{x}$ is the Kullback-Liebler (KL) divergence from $\prior$ to $q_1$, 
$H(q_3)\defn -\int q_3(\vec{x}) \log q_3(\vec{x}) \dif\vec{x}$ is the differential entropy of $q_3$,
and $\JGibbs(q,q,q)$ is known as the Gibbs free energy of $q$.
So, if \eqref{JGibbs} could be solved, it would give a way to compute the posterior that avoids computing $Z(\vec{y})$.
However, \eqref{JGibbs} is generally too difficult to solve, and so it was proposed in \cite{Opper:NIPS:05} to relax the equality constraints in \eqref{JGibbs} to moment-matching constraints, i.e.,
\begin{align}
\arg\min_{q_1,q_2,q_3} \JGibbs(q_1,q_2,q_3) 
    \text{~such that~} 
    \begin{cases}
    \E\{\vec{x}|q_1\} = \E\{\vec{x}|q_2\} = \E\{\vec{x}|q_3\} \\
    \tr(\Cov\{\vec{x}|q_1\}) = \tr(\Cov\{\vec{x}|q_2\}) = \tr(\Cov\{\vec{x}|q_3\}) ,
    \end{cases}
\label{eq:EC} 
\end{align}
where $\E\{\vec{x}|q_i\}$ and $\Cov\{\vec{x}|q_i\}$ denote the mean and covariance of $\vec{x}$ under $\vec{x}\sim q_i$ for $i=1,2,3$, respectively.
The authors of \cite{Opper:NIPS:05} then showed that the optimization problem \eqref{EC} is solved by the densities
\begin{align}
q_1(\vec{x};\vec{r}_1,\gamma_1) 
&\propto \like(\vec{x};\vec{y})\mc{N}(\vec{x};\vec{r}_1,\vec{I}/\gamma_1)\\
q_2(\vec{x};\vec{r}_2,\gamma_2) 
&\propto \prior(\vec{x})\mc{N}(\vec{r}_2;\vec{x},\vec{I}/\gamma_2)\\
q_3(\vec{x};\hvec{x},\eta)
&= \mc{N}(\vec{x};\hvec{x},\vec{I}/\eta)
\end{align}
for the values of $(\vec{r}_1,\gamma_1,\vec{r}_2,\gamma_2,\hvec{x},\eta)$ that lead to the satisfaction of the constraints in \eqref{EC}.
The resulting $\hvec{x}$ approximates the MMSE estimate $\hvec{x}\mmse$ and $\eta^{-1}$ approximates the resulting MMSE $\frac{1}{N}\tr(\Cov\{\vec{x}|\vec{y}\})$.

Although there is generally no closed-form expression for the moment-matching values of $(\vec{r}_1,\gamma_1,\vec{r}_2,\gamma_2,\hvec{x},\eta)$, one can iteratively solve for them using the EC algorithm shown in \algref{ec} (a form of expectation propagation (EP) \cite{Minka:Diss:01}) using the estimation functions
\begin{align}
\vec{f}_1(\vec{r}_1;\gamma_1)
&= \E\{\vec{x}|q_1\} 
= \textstyle \int \vec{x}\, q_1(\vec{x};\vec{r}_1;\gamma_1) \dif\vec{x}
%= \frac{\int \vec{x}\, \like(\vec{x};\vec{y}) \mc{N}(\vec{x};\vec{r}_1,\vec{I}/\gamma_1) \dif \vec{x}}{\int \like(\vec{x};\vec{y}) \mc{N}(\vec{x};\vec{r}_1,\vec{I}/\gamma_1) \dif \vec{x}} 
\label{eq:f1_ec}\\
\vec{f}_2(\vec{r}_2;\gamma_2)
&= \E\{\vec{x}|q_2\} 
= \textstyle \int \vec{x}\, q_2(\vec{x};\vec{r}_2;\gamma_2) \dif\vec{x}
%= \frac{\int \vec{x}\, \prior(\vec{x}) \mc{N}(\vec{x};\vec{r}_2,\vec{I}/\gamma_2) \dif \vec{x}}{\int \prior(\vec{x}) \mc{N}(\vec{x};\vec{r}_2,\vec{I}/\gamma_2) \dif \vec{x}} 
\label{eq:f2_ec}.
\end{align}
It is straightforward to show (see, e.g., \cite{Fletcher:ISIT:16}) that, at a fixed point of \algref{ec}, one obtains $\hvec{x}_1=\hvec{x}_2=\hvec{x}$ and $\eta_1=\eta_2=\eta=\gamma_1+\gamma_2$.

\begin{algorithm}[t]
\caption{EC / VAMP}
\label{alg:ec}
\begin{algorithmic}[1]
\Require{$\vec{f}_1(\cdot;\cdot)\text{~and~}\vec{f}_2(\cdot;\cdot)$.}
\State{Select initial $\vec{r}_1\in\Real^N,\gamma_1>0$}
\Repeat
    \State{// Measurement fidelity}
    \State{$\hvec{x}_1 \gets \vec{f}_1(\vec{r}_1;\gamma_1)$} \label{line:ec_x1}
    \State{$\eta_1 \gets \gamma_1 N / 
        \tr( \nabla\vec{f}_1(\vec{r}_1;\gamma_1) )$} \label{line:ec_eta1}
    \State{$\gamma_2 \gets \eta_1 - \gamma_1$}  \label{line:ec_gam2}
    \State{$\vec{r}_2 \gets (\eta_1\hvec{x}_1 - \gamma_1\vec{r}_1)/\gamma_2$}
        \label{line:ec_r2}
    \State{// Denoising}
    \State{$\hvec{x}_2 \gets \vec{f}_2(\vec{r}_2;\gamma_2)$} \label{line:ec_x2}
    \State{$\eta_2 \gets \gamma_2 N / 
        \tr( \nabla\vec{f}_2(\vec{r}_2;\gamma_2) )$} \label{line:ec_eta2}
    \State{$\gamma_1 \gets \eta_2 - \gamma_2$}  \label{line:ec_gam1}
    \State{$\vec{r}_1 \gets (\eta_2\hvec{x}_2 - \gamma_2\vec{r}_2)/\gamma_1$}
        \label{line:ec_r1}
\Until{Terminated}
\State \Return $\hvec{x}_2$
\end{algorithmic}
\end{algorithm}

For WGN-corrupted linear measurements $\vec{y}$ as in \eqref{y}, the likelihood becomes $\like(\vec{x};\vec{y}) = \mc{N}(\vec{y};\vec{Ax},\vec{I}/\gamma_w)$ and so $\vec{f}_1$ in \eqref{f1_ec} manifests as
\begin{align}
\vec{f}_1(\vec{r}_1;\gamma_1)
= \big(\gamma_w\vec{A}\herm\vec{A}+\gamma_1\vec{I}\big)^{-1}\big(\gamma_w\vec{A}\herm\vec{y} + \gamma_1\vec{r}_1\big)
\label{eq:f1_ec_awgn} .
\end{align}
%which coincides with the proximal operator in \eqref{prox_L2} used in ADMM.
This $\vec{f}_1$ can be interpreted as the MMSE estimator of $\vec{x}\true$ from the measurements $\vec{y}=\vec{Ax}\true+\mc{N}(\vec{0},\vec{I}/\gamma_w)$ under the pseudo-prior $\vec{x}\true\sim\mc{N}(\vec{r}_1,\vec{I}/\gamma_1)$.
Meanwhile $\vec{f}_2$ in \eqref{f2_ec} can be interpreted as the MMSE estimator of $\vec{x}\true$ from the pseudo-measurements $\vec{r}_2=\vec{x}\true+\mc{N}(\vec{0},\vec{I}/\gamma_2)$ under the prior $\vec{x}\true\sim\prior(\vec{x})$.
In other words, $\vec{f}_2$ can be interpreted as the MMSE denoiser of $\vec{r}_2$.
This pseudo-measurement model is exactly the same one that arises in AMP (recall \eqref{e_amp}).

For generic $\vec{A}$, there are no guarantees on the quality of the EC estimate $\hvec{x}$ or even the convergence of \algref{ec}.
But when $\vec{A}$ is a right orthogonally invariant (ROI) random matrix, EC has a rigorous high-dimensional analysis.
ROI matrices can be understood as those with singular value decompositions of the form $\vec{USV}\tran$, for orthogonal $\vec{U}$, diagonal $\vec{S}$, and random $\vec{V}$ uniformly distributed over the set of orthogonal matrices; the ROI class includes the i.i.d.\ Gaussian class but is more general.
In particular, \cite{Rangan:ISIT:17,Rangan:TIT:19} showed that, for asymptotically large ROI matrices $\vec{A}$, EC's denoiser input error $\vec{e}_2=\vec{r}_2-\vec{x}\true$ obeys
\begin{align}
\vec{e}_2 \sim \mc{N}(\vec{0},\vec{I}/\gamma_2)
\end{align}
at every iteration, similar to AMP (recall \eqref{e_amp}).
Likewise, macroscopic statistical quantities like MSE $\mc{E}=\frac{1}{N}\|\hvec{x}-\vec{x}\true\|^2$ obey a scalar state evolution. 
Importantly, these results hold not only for the MMSE denoising functions $\vec{f}_2$ specified by EC, but also for general Lipschitz $\vec{f}_2$ \cite{Rangan:TIT:19,Fletcher:NIPS:18}.
Due to the tight connections with AMP, the EC algorithm with general Lipschitz $\vec{f}_2$ was referred to as Vector AMP (VAMP) in \cite{Rangan:TIT:19,Fletcher:NIPS:18}.
A similar rigorous analysis of EC with asymptotically large, right unitarily invariant (RUI) matrices $\vec{A}$ was given in \cite{Takeuchi:ISIT:17}.
For those matrices, the SVD of $\vec{A}$ takes the form $\vec{USV}\herm$ with random $\vec{V}$ uniformly distributed over the set of unitary matrices.

Given that the EC/VAMP algorithm can be used with estimation functions other than the MMSE choices in \eqref{f1_ec}-\eqref{f2_ec}, one might wonder whether it can be applied to solve optimization problems of the form \eqref{opt}, i.e., MAP estimation.
This was answered affirmatively in \cite{Fletcher:ISIT:16}.
In particular, it suffices to choose 
\begin{align}
\vec{f}_1(\vec{r}_1,\gamma_1) &= \prox_{\gamma_1^{-1} g_1}(\vec{r}_1) \\
\vec{f}_2(\vec{r}_2,\gamma_2) &= \prox_{\gamma_2^{-1} g_2}(\vec{r}_2) .
\end{align}
Furthermore, the resulting EC/VAMP algorithm can be recognized as a form of ADMM.
If we fix the values of $\gamma_1$ and $\gamma_2$ over the iterations (which forces $\eta_1=\eta_2=\gamma_1+\gamma_2$) and define 
$\vec{u}_1\defn\gamma_1(\hvec{x}_2-\vec{r}_1)$ and
$\vec{u}_2\defn\gamma_2(\vec{r}_2-\hvec{x}_1)$,
we can rewrite EC/VAMP from \algref{ec} as the recursion 
\begin{subequations}\label{eq:admmprs2}
\begin{align}
\hvec{x}_1
&\leftarrow \prox_{\gamma_1^{-1} g_1}(\hvec{x}_2-\vec{u}_1/\gamma_1) \\
\vec{u}_2
&\leftarrow \vec{u}_1 + \gamma_1(\hvec{x}_1-\hvec{x}_2) \\
\hvec{x}_2
&\leftarrow \prox_{\gamma_2^{-1} g_2}(\hvec{x}_1+\vec{u}_2/\gamma_2) 
\label{eq:admmprs2_denoise}\\
\vec{u}_1
&\leftarrow \vec{u}_2 + \gamma_2(\hvec{x}_1-\hvec{x}_2)
\end{align}
\end{subequations}
which is a generalization of ADMM in \eqref{admm} to two dual updates and two penalty parameters.
If we additionally constrain $\gamma_1=\gamma_2\defn\gamma$ then \eqref{admmprs2} reduces to
\begin{subequations}\label{eq:admm_prs}
\begin{align}
\hvec{x}_1
&\leftarrow \prox_{\gamma^{-1} g_1}(\hvec{x}_2-\vec{u}) \\
\vec{u}
&\leftarrow \vec{u} + (\hvec{x}_1-\hvec{x}_2) \\
\hvec{x}_2
&\leftarrow \prox_{\gamma^{-1} g_2}(\hvec{x}_1+\vec{u}) 
\label{eq:admmprs_denoise}\\
\vec{u}
&\leftarrow \vec{u} + (\hvec{x}_1-\hvec{x}_2) ,
\end{align}
\end{subequations}
which is known as the Peaceman-Rachford or symmetric variant of ADMM, and which is said to converge faster than standard ADMM \cite{He:JO:14,He:JIS:16}.
%Of course, replacing the prox in \eqref{admmprs2_denoise} or \eqref{admmprs_denoise} with a generic denoising function $\vec{f}_2$ would yield the PnP versions of these algorithms.
The important point is that EC/VAMP can be understood as a generalization of ADMM that i) uses two penalty parameters and ii) adapts those penalty parameters with the iterations.

Inspired by D-AMP \cite{Metzler:ICIP:15}, a ``Denoising VAMP'' (D-VAMP) was proposed in \cite{Schniter:BASP:17}, which used VAMP with high-performance image denoisers and the Monte-Carlo approximation \eqref{trJfapprox}.
Although D-VAMP was shown to work well with large ROI $\vec{A}$, it can diverge with non-random $\vec{A}$, such as those encountered in MRI.
Some intuition behind the failure of VAMP with non-ROI $\vec{A}$ will be given in \secref{dgec}

\subsection{AMP/VAMP for MRI} \label{sec:mriamp}

The versions of $\vec{A}$ that manifest in linear inverse problems often do not have sufficient randomness for the AMP and EC/VAMP algorithms to work as intended.
If used without modification, AMP and EC/VAMP algorithms may simply diverge.
This is definitely the case for MRI, where $\vec{A}$ is the Fourier-based matrix shown in \eqref{A}.
Consequently, modified AMP and VAMP algorithms have been proposed specifically for MRI image recovery.

For example, \cite{Eksioglu:JIS:18} proposed to use D-AMP \eqref{amp} with $\beta\ll \sqrt{N}/\|\vec{A}\|_F$, which helps to slow down the algorithm and help it converge, but at the cost of degrading its fixed points, as we show in \secref{single}.
%A refinement based on a matrix-valued $\beta$ was proposed in \cite{Sarkar:Diss:20}.
The authors of \cite{Sarkar:ICASSP:21} instead used damping to help D-VAMP converge without disturbing its fixed points. 
In conjunction with a novel initialization based on \textb{Peaceman-Rachford ADMM}, the latter scheme was competitive with PnP-ADMM for single-coil MRI.

\textb{For the special case of 2D point-sampled MRI,} the principle of density compensation \cite{Pipe:MRM:99} has also been exploited for the design of AMP-based algorithms.
For applications where k-space is non-uniformly sampled, density compensation applies a gain to each k-space sample that is proportional to the inverse sampling density at that sample, changing $\vec{y}$ to $\vec{G y}$ in \eqref{y} with diagonal gain matrix $\vec{G}$.
\textb{When $\vec{A}$ uses a 2D point mask,} the error in the density-compensated linear estimate $\hvec{x}=\vec{A}\herm\vec{G y}$ behaves much more like white Gaussian noise than does the error in the standard linear estimate $\hvec{x}=\vec{A}\herm\vec{y}$ (see, e.g., \cite{Edupuganti:TMI:20}).
After observing the error to behave even more like white noise within wavelet subbands, Millard et al.\ \cite{Millard:OJSP:20} proposed a VAMP modification that employs density compensation in the linear stage and wavelet thresholding in the denoising stage.
The resulting ``Variable-Density AMP'' (VDAMP) algorithm was empirically observed to successfully track the error variance in each subband over the algorithm iterations.
The authors then extended their work from single- to multicoil MRI in \cite{Millard:22}, calling their approach Parallel VDAMP (P-VDAMP). 

To improve on VDAMP, Metzler and Wetzstein \cite{Metzler:ICASSP:21} proposed a PnP extension of the algorithm, where the wavelet-thresholding denoiser was replaced by a novel DNN that accepts a vector of subband error variances at each iteration.
The resulting Denoising VDAMP (D-VDAMP) showed a significant boost in recovery accuracy over VDAMP for single-coil \textb{2D point-sampled} MRI \cite{Metzler:ICASSP:21}.
Although D-VDAMP works relatively well, it requires early stopping for good performance (as we demonstrate in \secref{single}), which suggests that D-VDAMP has suboptimal fixed points and hence can be improved. 
\textb{Most recently, a ``Denoising P-VDAMP'' (DP-VDAMP) was proposed \cite{Millard:ISMRM:22,Millard:Diss:21} that replaces the wavelet thresholding step in P-VDAMP with a DNN denoiser.
A major shortcoming of VDAMP, P-VDAMP, D-VDAMP, and DP-VDAMP is that they are designed around the use of 2D point sampling masks, which are impractical and uncommon in clinical MRI.
These shortcomings} motivate our proposed approach, which is described in the next section.

%%%%%%%%%%%%%%%%%%%%%%%%%%%%%%%%%%%%%%%%%%%%%%%%%%%%%%%%%%%%%%%%%%%%%%%%%%%%%%%%

\section{Proposed Approach}

We now propose a new approach to MRI recovery that, like \textb{the VDAMP-based algorithms \cite{Millard:OJSP:20,Millard:22,Metzler:ICASSP:21,Millard:ISMRM:22}}, formulates signal recovery in the wavelet domain, but, unlike the VDAMP-based algorithms, does not use density compensation \textb{and does not require the use of 2D point masks}. 
Our approach is based on a PnP version of the generalized EC algorithm, which is described in \secref{dgec}, in conjunction with a DNN denoiser that can handle parameterized colored noise, which is described in \secref{corr+corr}. 

\subsection{Wavelet-domain denoising GEC algorithm} \label{sec:dgec}

To motivate wavelet-domain signal recovery, we first present an intuitive explanation of the problems faced by EC/VAMP with non-ROI $\vec{A}$.
To start, one can show (see \iftoggle{include_app}{\appref{recursion}}{Appendix A}) that EC/VAMP's denoiser input error $\vec{e}_2\defn\vec{r}_2-\vec{x}\true$ can be written as
\begin{align}
\ebf_2
&= \Vbf\vec{D}\Vbf\herm \ebf_1 + \ubf 
\label{eq:ebf2} ,
\end{align}
where
$\vec{V}$ is the right singular vector matrix of $\vec{A}$,
the matrix $\vec{D}$ is diagonal with $\tr(\vec{D})=0$,
$\vec{e}_1\defn\vec{r}_1-\vec{x}\true$ is the error on the input to $\vec{f}_1$,
and
$\vec{u}$ is a linear transformation of the measurement noise vector $\vec{w}$ from \eqref{y}.
When $\vec{A}$ is ROI or RUI, $\vec{V}$ is drawn uniformly from the group of orthogonal or unitary matrices, respectively. 
\iftoggle{include_app}{\appref{ebf2}}{Appendix B} shows for the orthogonal case that, 
if $\Vbf$ and $\ebf_1$ are treated as independent up to the fourth moment
and $\wbf$ and $\ebf_1$ are uncorrelated,
then, conditioned on $\ebf_1$, both $\Vbf\vec{D}\Vbf\herm \ebf_1$ and
$\ebf_2$ are asymptotically white and \text{zero-mean} Gaussian.
%$\Exp(\ebf_2|\ebf_1) \stackrel{N\rightarrow\infty}{=} \zero$
%and $\Cov(\ebf_2|\ebf_1) \stackrel{N\rightarrow\infty}{=} \varepsilon_2 \Ibf$,
%for some $\varepsilon_2>0$, where the expectation is over $\Vbf$ and $\wbf$.
Importantly, this behavior occurs despite the tendency for $\ebf_1$ to be highly structured and non-Gaussian.

When $\vec{A}$ is not a high-dimensional ROI or RUI matrix, however, there is no guarantee that 
%$\Vbf\vec{D}\Vbf\herm$ will randomize $\ebf_1$.
\textb{$\Vbf\vec{D}\Vbf\herm\ebf_1$ will asymptotically be white and zero-mean Gaussian.}
For example, when $\vec{A}=\vec{MF}$ as in single-coil MRI and $\vec{x}\true$ is a natural image, this 
desired 
%randomization does not happen 
\textb{property does not manifest}
because the $\vec{x}\true$ (and thus $\ebf_1$) has a high concentration of energy at low frequencies and $\vec{V}\herm=\vec{F}$ focuses that error into a few dimensions of $\vec{D}$.
%A similar behavior plagues AMP, as explained in \appref{amp}.

\begin{figure}[t]
%\boxed{
 \centering
  \includegraphics[width = \columnwidth,trim={110 0 100 0},clip]{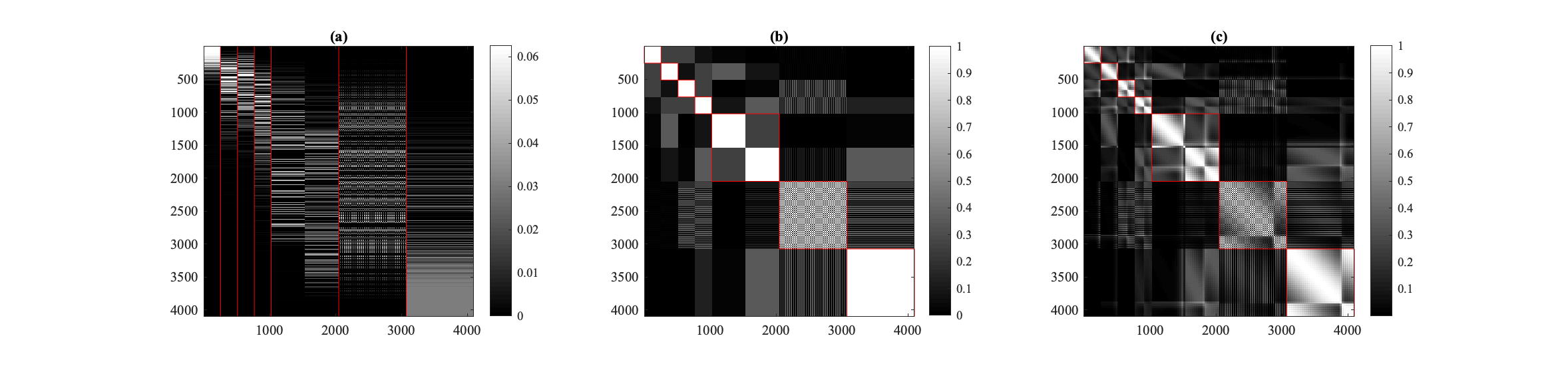}
  \caption{\blue Approximate block-diagonality of 2D Fourier-wavelet matrices. 
  Using $\abs(\cdot)$ to denote the entry-wise magnitude operation, (a) shows $\abs(\vec{F\Psi}\tran)$ with rows sorted according to distance from the k-space origin, columns sorted according to wavelet subbands, and subband boundaries denoted by red lines. 
  Meanwhile, (b) shows the matrix product $\abs(\vec{F\Psi}\tran)\tran\abs(\vec{F\Psi}\tran)$ and (c) shows $\abs(\vec{G})\tran\abs(\vec{G})$ for the multi-coil Fourier-wavelet matrix $\vec{G}$ defined in \eqref{G}.
  The approximate block-diagonality of (b) and (c) suggests that the columns of the 2D Fourier-wavelet matrices are well decoupled in the single- and multi-coil cases.}
  \label{fig:fourier_wavelet}
%}
\end{figure}

We now explain why using an AMP/EC algorithm to recover the wavelet coefficients $\vec{c}\true \defn \vec{\Psi}\vec{x}\true$, rather than the image pixels $\vec{x}\true$, offers a path to circumvent these issues.
For an orthogonal discrete wavelet transform (DWT) $\vec{\Psi}$, we have $\vec{x}\true=\vec{\Psi}\tran\vec{c}\true$ and so \eqref{y} implies the measurement model
\begin{align}
\vec{y} = \vec{B}\vec{c}\true + \vec{w}
\text{~~with~~}
\vec{B} \defn \vec{A\Psi}\tran 
\label{eq:y2} .
\end{align}
In the case where $\vec{A}$ is a subsampled version of the Fourier matrix $\vec{F}$, the matrix $\vec{B}$ is a subsampled Fourier-wavelet matrix $\vec{F\Psi}\tran$.
The Fourier-wavelet matrix is known to be approximately block diagonal \textb{after appropriate row-sorting} \cite{Adcock:FMS:17}, where the blocks correspond to the wavelet subbands.
\textb{This means that $\vec{B}$ in \eqref{y2} primarily mixes the wavelet coefficients $\vec{c}\true$ \emph{within} subbands rather than \emph{across} subbands.
Consequently, if that mixing has a sufficiently randomizing effect on each subband of $\ebf_1$}, then---with an appropriate EC-style algorithm design---the subband error vectors $\ebf_2$ can be kept approximately i.i.d.\ Gaussian across the iterations, although with a possibly different variance in each subband. 
%\textb{Although it is clear how to sort the rows in the 1D case to make $\vec{F\Psi}\tran$ approximately block-diagonal, it is not so clear in the 2D case.}
\textb{In \figref{fourier_wavelet}(a), we plot $\abs(\vec{F\Psi}\tran)$ for the 2D case with the rows sorted according to the distance of their corresponding k-space sample to the origin. 
Although this row-sorting does not yield an approximately block-diagonal matrix, it should be clear from the discussion above that row-sorting is unimportant;
it only matters that the columns of $\vec{B}$ for each given subband have a sufficiently randomizing effect on that subband and are approximately decoupled from the columns of other subbands.
To illustrate the degree of column-decoupling in $\vec{F\Psi}\tran$, we plot $\abs(\vec{F\Psi}\tran)\tran\abs(\vec{F\Psi}\tran)$ in \figref{fourier_wavelet}(b).
We plot this particular quantity because, if $\vec{F\Psi}\tran=\vec{JD}$ where $\vec{J}$ is a permutation matrix and $\vec{D}$ is a perfectly block-diagonal matrix, then $\abs(\vec{F\Psi}\tran)\tran\abs(\vec{F\Psi}\tran)$ will be perfectly block-diagonal for \emph{any} $\vec{J}$, i.e., for any row-sorting.
The fact that \figref{fourier_wavelet}(b) looks approximately block-diagonal suggests that the column-blocks of $\vec{F\Psi}\tran$ are significantly decoupled.} 

\textb{The discussion in the previous paragraph pertains to single-coil MRI.
In the multi-coil case, the matrix $\vec{A}$ takes the form in \eqref{A} and so $\vec{B}$ from \eqref{y2} manifests as
\begin{align}
\vec{B} 
%= \mat{\vec{MF}\Diag(\vec{s}_1)\vec{\Psi}\tran\\[-2mm]\vdots\\[-2mm]\vec{MF}\Diag(\vec{s}_C)\vec{\Psi}\tran} 
= \mat{\vec{M} & &\\[-2mm]
               & \ddots &\\[-2mm]
               & & \vec{M}}
  \vec{G}
  \text{~~with~~}
  \vec{G}\defn
  \mat{\vec{F}\Diag(\vec{s}_1)\vec{\Psi}\tran\\[-2mm]\vdots\\[-2mm]\vec{F}\Diag(\vec{s}_C)\vec{\Psi}\tran}
\label{eq:G} .
\end{align}
We would like that the multi-coil Fourier-wavelet matrix $\vec{G}$ has a sufficiently randomizing effect on each given subband in $\vec{c}_0$ and that the columns corresponding to that subband are decoupled from the columns of other subbands.
To investigate the decoupling behavior of $\vec{G}$, we plot $\abs(\vec{G})\tran\abs(\vec{G})$ in \figref{fourier_wavelet}(c) for the case of $C=8$ ESPIRiT-estimated coils and notice that, similar to the single-coil quantity $\abs(\vec{F\Psi}\tran)\tran\abs(\vec{F\Psi}\tran)$ in \figref{fourier_wavelet}(b), the multi-coil quantity $\abs(\vec{G})\tran\abs(\vec{G})$ looks approximately block-diagonal.
}

The first 
%incarnation of this idea 
\textb{AMP-based method that exploited the aforementioned Fourier-wavelet properties} 
was the VAMPire algorithm from \cite{Schniter:BASP:17b}, where a normalization of the subband energies in $\vec{c}\true$ was used to equalize the subband error variances in $\vec{e}_2$, with the goal of tracking a single variance across the iterations (thus facilitating the use of D-VAMP). 
In other words, \eqref{y2} was written as $\vec{y}=\ovec{B}\ovec{c}\true+\vec{w}$ with $\ovec{B}=\vec{B}\Diag(\vec{g})$ and $\ovec{c}\true=\Diag(\vec{g})^{-1}\vec{c}\true$, for $\vec{g}$ such that $\diag(\Cov(\ovec{c}\true))\approx\vec{1}$.
But, because the variances of the subbands in $\vec{e}_2$ do change with the iterations, the scheme in \cite{Schniter:BASP:17b} was far from optimal.

\begin{algorithm}[t]
\caption{Generalized EC (GEC)}
\label{alg:gec}
\begin{algorithmic}[1]
\Require{$\vec{f}_1(\cdot;\cdot),~\vec{f}_2(\cdot;\cdot),\text{ and }\gdiag(\cdot)$.}
\State{Select initial $\vec{r}_1,\vec{\gamma}_1$}
\Repeat
    \State{// Measurement fidelity}
    \State{$\hvec{x}_1 \gets \vec{f}_1(\vec{r}_1,\vec{\gamma}_1)$} \label{line:gec_x1}
    \State{$\vec{\eta}_1 \gets 
        \Diag( \gdiag( \nabla\vec{f}_1(\vec{r}_1,\vec{\gamma}_1) ))^{-1} 
        \vec{\gamma}_1$} \label{line:gec_eta1}

    \State{$\vec{\gamma}_2 \gets \vec{\eta}_1 - \vec{\gamma}_1$}  \label{line:gec_gam2}
    \State{$\vec{r}_2 \gets \Diag(\vec{\gamma}_2)^{-1}(\Diag(\vec{\eta}_1)\hvec{x}_1 - \Diag(\vec{\gamma}_1)\vec{r}_1)$}
        \label{line:gec_r2}
    \State{// Denoising}
    \State{$\hvec{x}_2 \gets \vec{f}_2(\vec{r}_2,\vec{\gamma}_2)$} \label{line:gec_x2}
    \State{$\vec{\eta}_2 \gets 
        \Diag( \gdiag( \nabla\vec{f}_2(\vec{r}_2,\vec{\gamma}_2) ))^{-1} 
        \vec{\gamma}_2$} \label{line:gec_eta2}
    \State{$\vec{\gamma}_1 \gets \vec{\eta}_2 - \vec{\gamma}_2$}  \label{line:gec_gam1}
    \State{$\vec{r}_1 \gets \Diag(\vec{\gamma}_1)^{-1}(\Diag(\vec{\eta}_2)\hvec{x}_2 - \Diag(\vec{\gamma}_2)\vec{r}_2)$}
        \label{line:gec_r1}
\Until{Terminated}
\State \Return $\hvec{x}_2$
\end{algorithmic}
\end{algorithm}

In this work, we propose an EC-based PnP method that recovers the wavelet coefficients $\vec{c}\true$ and tracks the variances of both $\vec{e}_1$ and $\vec{e}_2$ in each wavelet subband. 
Our approach leverages the Generalized EC (GEC) framework from \cite{Fletcher:ISIT:16}, which is summarized in \algref{gec} and \eqref{gdiag}.
GEC is a generalization of EC from \algref{ec} that averages the diagonal of the Jacobian $\nabla\vec{f}_i$ separately over $L$ coefficient subsets using the $\gdiag\!:\Real^{N\times N}\!\rightarrow\!\Real^N$ operator:
\begin{subequations}
\label{eq:gdiag} 
\begin{eqnarray}
\gdiag(\vec{Q})
&\defn& [d_1\vec{1}_{N_1}\tran,\dots,d_L\vec{1}_{N_L}\tran]\tran \qquad 
\label{eq:gdiag_vec}\\
d_\ell 
&=& \frac{\tr\{\vec{Q}_{\ell\ell}\}}{N_\ell} 
\label{eq:gdiag_tr}.
\end{eqnarray}
\end{subequations}
In \eqref{gdiag}, $N_\ell$ denotes the size of the $\ell$th subset
%(so that $\sum_{\ell=1}^L N_\ell=N$), 
%$\vec{1}_{N_\ell}$ is the ones vector of dimension $N_\ell$, 
and $\vec{Q}_{\ell\ell}\in\Real^{N_\ell\times N_\ell}$ denotes the $\ell$th diagonal subblock of the matrix input $\vec{Q}$.
%In \algref{gec}, $\Diag(\cdot)$ creates a diagonal matrix from its vector argument.
%
When GEC is used to solve a convex optimization problem of the form \eqref{opt}, the functions $\vec{f}_i$ take the form
\begin{align}
\vec{f}_i(\vec{r},\vec{\gamma}) 
&= \gprox_{g_i,\vec{\gamma}}(\vec{r})
\quad\text{for}\quad
\gprox_{\rho,\vec{\gamma}}(\vec{r}) 
\defn \arg\min_{\vec{x}} \big\{ \rho(\vec{x}) + \tfrac{1}{2}\|\vec{x}-\vec{r}\|^2_{\vec{\gamma}}\big\}
\label{eq:gprox} ,
\end{align}
where $\|\vec{q}\|_{\vec{\gamma}} \defn \sqrt{\vec{q}\herm\Diag(\vec{\gamma})\vec{q}}$.
When $L\!=\!1$, GEC reduces to EC/VAMP.
In that case, $\vec{\gamma}=\gamma\vec{1}$ and $\gprox_{\rho,\vec{\gamma}}=\prox_{\gamma^{-1}\rho}$.

\begin{algorithm}[t]
\caption{Denoising GEC operating in the wavelet domain}
\label{alg:dgec}
\begin{algorithmic}[1]
\Require{$\vec{f}_1(\cdot,\cdot),~\vec{f}_2(\cdot,\cdot),~\gdiag(\cdot),\text{ and }\vec{\Psi}$.}
\State{Select initial $\vec{r}_1,\vec{\gamma}_1$}
\Repeat
    \State{// Measurement fidelity}
    \State{$\hvec{c}_1 \gets \vec{f}_1(\vec{r}_1,\vec{\gamma}_1)$} \label{line:dgec_x1}
    \State{$\vec{\eta}_1 \gets 
        \Diag( \gdiag( \nabla\vec{f}_1(\vec{r}_1,\vec{\gamma}_1) ))^{-1} 
        \vec{\gamma}_1$} \label{line:dgec_eta1}
    \State{$\vec{\gamma}_2 \gets \vec{\eta}_1 - \vec{\gamma}_1$}  \label{line:dgec_gam2}
    \State{$\vec{r}_2 \gets \Diag(\vec{\gamma}_2)^{-1}(\Diag(\vec{\eta}_1)\hvec{c}_1 - \Diag(\vec{\gamma}_1)\vec{r}_1)$}
        \label{line:dgec_r2}
    \State{// Denoising}
    \State{$\hvec{c}_2 \gets \vec{\Psi}\vec{f}_2(\vec{\Psi}\tran \vec{r}_2,\vec{\gamma}_2)$} \label{line:dgec_x2}
    \State{$\vec{\eta}_2 \gets 
        \Diag( \gdiag( \nabla\vec{f}_2(\vec{r}_2,\vec{\gamma}_2) ))^{-1} 
        \vec{\gamma}_2$} \label{line:dgec_eta2}
    \State{$\vec{\gamma}_1 \gets \vec{\eta}_2 - \vec{\gamma}_2$}  \label{line:dgec_gam1}
    \State{$\vec{r}_1 \gets \Diag(\vec{\gamma}_1)^{-1}(\Diag(\vec{\eta}_2)\hvec{c}_2 - \Diag(\vec{\gamma}_2)\vec{r}_2)$}
        \label{line:dgec_r1}
\Until{Terminated}
\State \Return $\hvec{x}_2 = \vec{\Psi}\tran\hvec{c}_2$
\end{algorithmic}
\end{algorithm}

Our proposed wavelet-domain Denoising GEC (D-GEC) approach is outlined in \algref{dgec}.
For the $\gdiag$ operator, we use \eqref{gdiag} with the diagonalization subsets defined by the $L=3D+1$ subbands of a depth-$D$ dyadic 2D orthogonal DWT.
Also, when computing $\gdiag(\nabla\vec{f}_1)$ and $\gdiag(\nabla\vec{f}_2)$ 
in lines~\ref{line:dgec_eta1}~and~\ref{line:dgec_eta2}, we approximate the $\tr\{\vec{Q}_{\ell\ell}\}$ terms in \eqref{gdiag_tr} using the Monte Carlo approach \textb{\cite{Ramani:TIP:08}}
\begin{eqnarray}
\tr\{\vec{Q}_{\ell\ell}\}
&\approx& \delta_{\ell}^{-1}\vec{q}_\ell\herm\big[\vec{f}_i(\vec{r}+\delta_{\ell}\vec{q}_\ell,\vec{\gamma})-\vec{f}_i(\vec{r},\vec{\gamma})\big]
\label{eq:trJfapprox_ll} ,
\end{eqnarray}
where we use i.i.d.\ unit-variance Gaussian coefficients for the $\ell$th coefficient subset in $\vec{q}_\ell$ and set all other coefficients in $\vec{q}_\ell$ to zero.
As a result of the chosen diagonalization, the $\vec{\gamma}_i$ vectors (for $i=1,2$) are structured as
\begin{align}
\vec{\gamma}_i
= [\gamma_{i,1}\vec{1}_{N_1}\tran,\dots,\gamma_{i,L}\vec{1}_{N_L}\tran]\tran
\label{eq:gam} ,
\end{align}
and the $\vec{\eta}_i$ vectors have a similar structure.
\textb{In \eqref{trJfapprox_ll} we used $\delta_{\ell}=\min\{\sqrt{1/\gamma_{\ell}},\|\vec{r}_{\ell}\|_1/N_\ell\}$ where $\vec{r}_{\ell}$ denotes the $\ell$th coefficient subset of $\vec{r}$.}

For the wavelet-measurement model \eqref{y2} with WGN $\vec{w}$, \eqref{gprox} implies that the $\vec{f}_1$ estimation function in \lineref{dgec_x1} of \algref{dgec} manifests as
\begin{align}
\vec{f}_1(\vec{r}_1,\vec{\gamma}_1) 
&= \big(\gamma_w\vec{B}\herm\vec{B}+\Diag(\vec{\gamma}_1)\big)^{-1}\big(\gamma_w\vec{B}\herm\vec{y}+\Diag(\vec{\gamma}_1)\vec{r}_1\big)
\label{eq:f1} .
\end{align}
When numerically solving \eqref{f1}, we exploit the fact that $\vec{B}$ is a fast operator by using the conjugate gradient (CG) method \cite{Golub:Book:96}.

For $\vec{f}_2$ in \lineref{dgec_x2} of \algref{dgec}, we use a pixel-domain DNN denoiser.
As shown in \lineref{dgec_x2}, we convert from the wavelet domain to the pixel domain and back when calling this denoiser.
Note that the denoiser $\vec{f}_2$ is provided with the vector $\vec{\gamma}_2$ of subband error precisions.
The design of this denoiser will be discussed in \secref{corr+corr}.
The experiments in \secref{example} suggest that the denoiser input error $\vec{e}_2=\vec{r}_2-\vec{c}\true$ does indeed obey 
\begin{align}
\vec{e}_2 
\sim \mc{N}(\vec{0},\Diag(\vec{\gamma}_2)^{-1})
\label{eq:dgec_e2}
\end{align}
for the $\vec{\gamma}_2$ vector computed in \lineref{dgec_gam2} of \algref{dgec}, similar to other AMP, VAMP, EC, and GEC algorithms.
Further work is needed to understand if this behavior can be predicted by a rigorous analysis.
The error model \eqref{dgec_e2} facilitates a principled way to train the DNN denoiser, as we discuss in the next section. 

\blue
We now discuss the initialization of D-GEC.
%We initialize the algorithm using \textr{$\vec{r}_1=\vec{B}\herm\vec{y} + \vec{n}$}, where $\vec{n}$ contains white Gaussian noise in each subband.
For \eqref{dgec_e2} to hold at all iterations, we 
need that the initial $\vec{\gamma}_1$ contains the precisions (i.e., inverse variances) of the subbands of the initial $\vec{e}_1=\vec{r}_1-\vec{c}\true$.
But initializing $\vec{\gamma}_1$ is complicated by the fact that $\vec{c}\true$ is unknown.
In response, we suggest initializing $\vec{\gamma}_1$ at an \emph{average} value such as 
\begin{align}
\hvec{\gamma}_1
%= \frac{\textb{0.05}}{\E\{\frac{1}{N}\|\vec{r}_1-\vec{c}\true\|^2\}}
= \Diag(\gdiag(\E\{(\vec{r}_1-\vec{c}\true)(\vec{r}_1-\vec{c}\true)\herm\}))^{-1}\vec{1}
\label{eq:gam_init},
\end{align}
where the expectation is approximated using a sample average over a training set (e.g., the dataset used to train the denoiser).
But this approach could fail if the precision of the initial error falls far from $\hvec{\gamma}_1$, which can happen if $\vec{r}_1$ is strongly dependent on $\vec{y}$.
Thus, we propose to initialize $\vec{r}_1=\vec{B}\herm\vec{y} + \vec{n}$, where $\vec{n}$ is Gaussian and white in each subband.
The per-subband variance of $\vec{n}$ should be large enough to dominate the behavior of $\vec{e}_1$, which makes the subband precisions easy to predict, but not so large that the algorithm is initialized at a terribly bad state.
For the experiments in \secref{example}, we set the per-subband variance of $\vec{n}$ at $10$ times the per-subband variance of $\vec{B}\herm\vec{y}-\vec{c}\true$, and observed that \eqref{dgec_e2} held at all iterations.
%To mitigate this problem, we find that it helps to use a smaller $\vec{\gamma}_1$ than suggested by \eqref{gam_init}, which essentially tells the algorithm to be less confident about the initialization of $\vec{r}_1$.
%In the experiments, we scaled the value from \eqref{gam_init} by $0.05$.
Although a careful choice of initialization is important for \eqref{dgec_e2} to hold at all iterations, we find that the initialization has little effect on the fixed points of D-GEC.
So, for the experiments in Sections~\ref{sec:multi_point},~\ref{sec:multi_line}, and \ref{sec:single}, we set $\vec{n}=\vec{0}$ to improve the accuracy of the initial $\vec{r}_1$ and thus speed D-GEC convergence.
\color{black}

\textb{Computationally, the cost of D-GEC is driven by lines~\ref{line:dgec_x1}-\ref{line:dgec_eta1} and \ref{line:dgec_x2}-\ref{line:dgec_eta2} of \algref{dgec}, which call $\vec{f}_1$ and $\vec{f}_2$, respectively, $L+1$ times when implementing \eqref{trJfapprox_ll}.
The $L+1$ calls to $\vec{f}_1$ can be performed in parallel (e.g., in a single minibatch on a GPU), as can the calls to $\vec{f}_2$.
As described above, each call to $\vec{f}_1$ involves running several iterations of CG.
For accurate D-GEC fixed points, we find that $10$ CG iterations suffice, and we use this setting in Sections~\ref{sec:multi_point},~\ref{sec:multi_line}, and \ref{sec:single}.
For D-GEC error to match the state-evolution predictions at all iterations, we find that $150$ CG iterations suffice, and we use this value in \secref{example}.
Each call to $\vec{f}_2$ involves calling the DNN denoiser that is described in the next subsection.}

\subsection{A DNN denoiser for correlated noise} \label{sec:corr+corr}

As suggested by \eqref{dgec_e2}, the denoiser $\vec{f}_2$ in \algref{dgec} faces the task of denoising the pixel-domain signal $\vec{\Psi}\tran\vec{r}_2$, where $\vec{r}_2=\vec{c}\true+\textb{\vec{n}~\text{for}~\vec{n}\sim}\mc{N}(\vec{0},\Diag(\vec{\gamma}_2)^{-1})$ and $\vec{c}\true$ are the wavelet coefficients of the true image $\vec{x}\true$.
The denoiser input can thus be modeled as
\begin{align}
\vec{\Psi}\tran\vec{r}_2 
= \vec{x}\true + \textb{\vec{n}~\text{for}~\vec{n}\sim}\mc{N}(\vec{0},\vec{\Psi}\tran\Diag(\vec{\gamma}_2)^{-1}\vec{\Psi})
\label{eq:corr},
\end{align}
i.e., the true image corrupted by colored Gaussian noise with (known) covariance matrix $\vec{\Psi}\tran\Diag(\vec{\gamma}_2)^{-1}\vec{\Psi}$.
Here, the $\vec{\gamma}_2$ vector takes the form shown in \eqref{gam}.

Although several DNNs have been proposed to tackle denoising with correlated noise (e.g., \cite{Ahmadzadegan:SRep:21,Chang:TIM:19,Tiirola:JVCIR:19}), to our knowledge, the only one compatible with our denoising task is the DNN proposed by Metzler and Wetzstein in \cite{Metzler:ICASSP:21}.
There, they built on the DnCNN network by providing every layer with $L$ additional channels, where the $\ell$th channel contains the standard deviation (SD) of the noise in the $\ell$th wavelet subband (i.e., $\sqrt{1/\gamma_{2,\ell}}$).
Their approach can be interpreted as an extension of FFDNet \cite{Zhang:TIP:08}, which provides one additional channel containing the SD of the assumed white corrupting noise, to multiple additional channels containing subband SDs.
In our numerical experiments in \secref{numerical}, we find that Metzler's denoising approach works well in some cases but poorly in others.
We believe that the observed poor performance may be the result of the fact that their DNN operates in the pixel domain, while their SD side information is given in the wavelet domain and the network is given no information about the wavelet transform $\vec{\Psi}$.

We now propose a novel approach to DNN denoising that can handle colored Gaussian noise with an arbitrary known covariance matrix.
Our approach starts with an arbitrary DNN denoiser (e.g., DnCNN \cite{Zhang:TIP:17}, UNet \cite{Ronneberger:MICCAI:15}, RNN \cite{Zhang:ICLR:19}, etc.) that normally accepts $C$ input channels (e.g., 3 channels for color-image denoising or 2 channels for complex-image denoising).
It then adds $K\geq 1$ sets of $C$ additional channels, where each set is fed an independently generated realization of noise with the same statistics as that corrupting the signal to be denoised.
In other words, if $\vec{u}\in\Real^{CN}$ denotes the (vectorized) noisy input signal, which obeys (recall \eqref{corr})
\begin{align}
\vec{u} = \vec{x}\true + \textb{\vec{n}~\text{for}~\vec{n}\sim}\mc{N}(\vec{0},\vec{\Sigma}) 
\end{align}
with arbitrary known $\vec{\Sigma}$, then the (vectorized) input to the $k$th additional channel-set would be 
\begin{align}
\vec{n}_k \sim \mc{N}(\vec{0},\vec{\Sigma}) ~~\forall k=1,\dots,K,
\end{align}
where $\{\vec{n}_k\}_{k=1}^K$ are mutually independent and independent of $\vec{u}$.
The hope is that, during training, the denoiser learns how to i) extract the relevant statistics from $\{\vec{n}_k\}_{k=1}^K$ and ii) use them productively for the denoising of $\vec{u}$.
Here, $K$ is a design parameter; for our D-GEC application we find that $K=1$ suffices.
Because the denoiser accepts a signal corrupted by correlated noise plus additional realizations of correlated noise, we call our approach ``corr+corr.''

To train our corr+corr denoiser, we use the following approach.
Suppose that we have access to a training set of clean signals $\{\vec{x}_i\}$, and that we would like to train the denoiser to handle $\vec{\gamma}_2$ vectors from some distribution $p_\Gamma$.
During training, we draw many $\vec{\gamma}_2\sim p_\Gamma$ and, for each realization of $\vec{\gamma}_2$, we draw independent realizations of $\vec{v}$ and $\{\vec{n}_k\}_{k=1}^N$ from the distribution $\mc{N}(\vec{0},\vec{\Psi}\tran\Diag(\vec{\gamma}_2)^{-1}\vec{\Psi})$.
The $\vec{v}$ vector is then used to form the noisy signal $\vec{u}_i=\vec{x}_i+\vec{v}$ and the denoiser is given access to $\vec{N}\defn[\vec{n}_1,\dots,\vec{n}_K]$ when denoising $\vec{u}_i$.
Concretely, if we denote the corr+corr denoiser as $\vec{f}_2(\vec{u}_i,\vec{N};\vec{\theta})$, where $\vec{\theta}$ contains the trainable denoiser parameters, then we train those parameters using
\begin{align}
\hvec{\theta}
= \arg\min_{\vec{\theta}} \sum_i E\big\{  \mc{L}\big(\vec{x}_i,\vec{f}_2(\vec{x}_i+\vec{v},\vec{N};\vec{\theta})\big) \big\}
\label{eq:training},
\end{align}
where $\mc{L}(\cdot,\cdot)$ is a loss function that quantifies the error between its two vector-valued arguments.
Popular losses include \cite{Zhao:TCI:16} $\vec{\ell}_2$, $\vec{\ell}_1$, SSIM \cite{Wang:TIP:04}, or combinations thereof,
\textb{and in our experiments we used $\ell_2$ loss.}
The expectation in \eqref{training} is taken over both $\vec{v}$ and $\vec{N}$, which implicitly involves $p_\Gamma$.

In inference mode, we are given a noisy $\vec{u}$ and a single precision vector $\vec{\gamma}_2$.
From the latter, we generate a single independent realization of $\vec{N}\sim\mc{N}(\vec{0},\vec{\Psi}\tran\Diag(\vec{\gamma}_2)^{-1}\vec{\Psi})$
and then compute the denoised pixel-domain image estimate via
$\hvec{x}_2 = \vec{f}_2(\vec{u},\vec{N};\hvec{\theta})$.

In \secref{denoising} we show that our corr+corr denoiser performs better than Metzler's DnCNN and nearly as well as a genie-aided denoiser that knows the distribution of the test noise $\vec{v}\sim \vec{\Psi}\tran\Diag(\vec{\gamma}_2)^{-1}\vec{\Psi}$, with fixed $\vec{\gamma}_2$, at training time.

%%%%%%%%%%%%%%%%%%%%%%%%%%%%%%%%%%%%%%%%%%%%%%%%%%%%%%%%%%%%%%%%%%%%%%%%%%%%%%%%

\section{Numerical Experiments} \label{sec:numerical}

In this section, we present numerical experiments demonstrating the performance of the proposed corr+corr denoiser as well as the proposed D-GEC method applied to both single-coil and multicoil MRI recovery.

\subsection{Denoising experiments} \label{sec:denoising}

In this subsection, we compare the corr+corr denoiser proposed in \secref{corr+corr} to several existing denoisers.
We test all denoisers on the 10 MRI images from the Stanford 2D FSE dataset \cite{Ong:ISMRM:18} shown in \figref{test_images}, which ranged in size from $320\times 320$ to $416\times 416$.
Noisy images were obtained by corrupting those test images by additive zero-mean Gaussian noise of covariance
\begin{align}
\vec{\Sigma} = \vec{\Psi}\tran\Diag(\vec{\gamma})^{-1}\vec{\Psi} 
\label{eq:test},
\end{align}
with $\vec{\Psi}$ a 2D Haar wavelet transform of depth $D=1$.
This wavelet transform has $L=4$ subbands, and so the precision vector $\vec{\gamma}$ in \eqref{test} is structured as $\vec{\gamma}=[\gamma_1 \vec{1}\tran_{N/4},\dots,\gamma_4\vec{1}\tran_{N/4}]\tran$ and thus parameterized by the four precisions $[\gamma_1,\gamma_2,\gamma_3,\gamma_4]$, or equivalently the four SDs $\big[\tfrac{1}{\sqrt{\gamma_1}},\tfrac{1}{\sqrt{\gamma_2}},\tfrac{1}{\sqrt{\gamma_3}},\tfrac{1}{\sqrt{\gamma_4}}\big]$.
We test the denoisers under different assumptions on these SDs, as indicated by the rows in \tabref{results_denoiser}.
For some tests, we use a fixed SD vector, while for other tests we average over a distribution of SD vectors.

\begin{figure}[t]
\centering
\newcommand{\wid}{0.19\columnwidth}
\includegraphics[width=\wid]{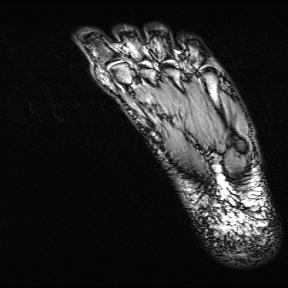}
\includegraphics[width=\wid]{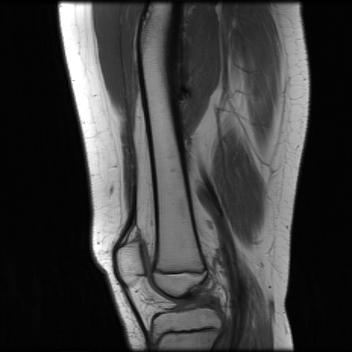}
\includegraphics[width=\wid]{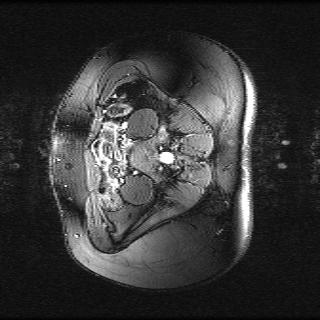}
\includegraphics[width=\wid]{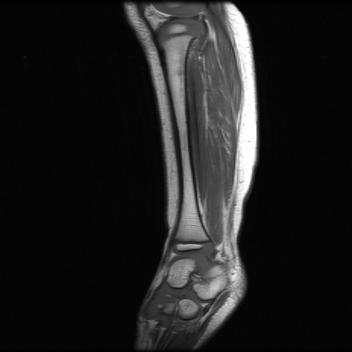}
\includegraphics[width=\wid]{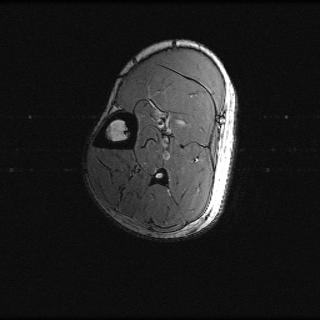}\mbox{}\\[1mm]
\includegraphics[width=\wid]{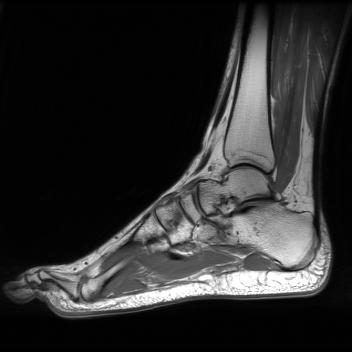}
\includegraphics[width=\wid]{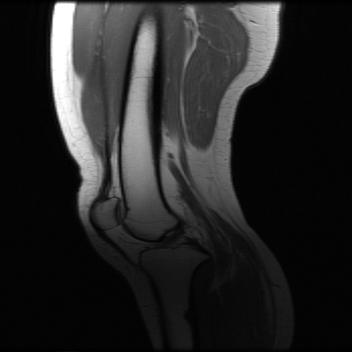}
\includegraphics[width=\wid]{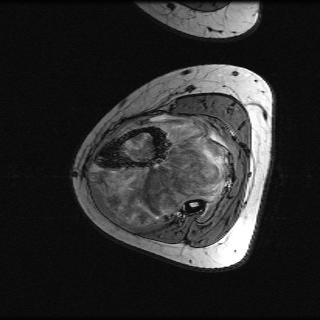}
\includegraphics[width=\wid]{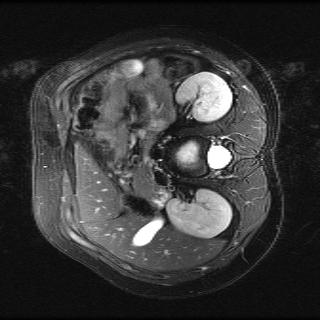}
\includegraphics[width=\wid]{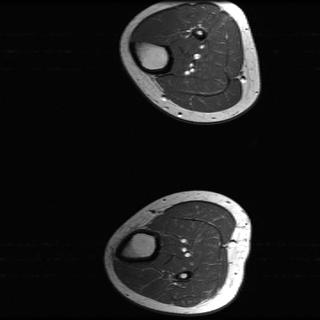}
\caption{Test images from the Stanford 2D FSE MRI dataset \cite{Ong:ISMRM:18}.}
\label{fig:test_images}
\end{figure}

When training the denoisers, we used the $70$ training MRI images from the Stanford 2D FSE dataset.
We trained to minimize \textb{$\ell_2$} loss on a total of $44\,000$ patches of size $40\times 40$ taken with stride $10\times 10$.
All denoisers used the bias-free version of DnCNN from \cite{Mohan:ICLR:20}, with the exception of Metzler's DnCNN from \cite{Metzler:ICASSP:21}, which used the publicly available code provided by the author.
For both corr+corr and Metzler's DnCNN, when training, we used random subband SDs $\{1/\sqrt{\gamma_\ell}\}_{\ell=1}^4$ drawn independently from a uniform distribution over the interval $[0,50/255]$.
When interpreting the value ``$50/255$,'' note that the image pixel values were in $[0,1]$ for this dataset.
As a baseline method, we trained bias-free DnCNN using white noise with a standard deviation distributed uniformly over the interval $[0,50/255]$.
We expect this ``white DnCNN'' to perform poorly with colored testing noise.
%GENIE CASE
As an upper bound on performance, we trained bias-free DnCNN using the same fixed value of the SD vector $\big[\tfrac{1}{\sqrt{\gamma_1}},\tfrac{1}{\sqrt{\gamma_2}},\tfrac{1}{\sqrt{\gamma_3}},\tfrac{1}{\sqrt{\gamma_4}}\big]$
that is used when testing.
The resulting ``genie DnCNN'' is specialized to that particular SD vector, and thus not useful in practical situations where the test SD is unknown during training (e.g., in D-GEC). 
%MATCHED CASE
%As another baseline, we trained bias-free DnCNN using noise that is distributed identically to the noise used when testing.
%When the testing noise has a fixed covariance $\vec{\Sigma}$, this ``matched DnCNN'' should perform better than the other denoisers because is specialized to handle that (and only that) testing noise distribution.
%However, when the testing noise has a random covariance $\vec{\Sigma}$ (due to $\vec{\gamma}$ being random), this matched DnCNN has a disadvantage relative to corr+corr and Metzler's DnCNN in that it is not provided with side information about the $\vec{\gamma}$ corrupting each test sample.

The results of our denoiser comparison are presented in \tabref{results_denoiser} using the metrics of PSNR and SSIM \cite{Wang:TIP:04} \textb{along with the respective standard errors (SE)}.
In the first four rows of the table, performance is evaluated for a fixed value of the SD vector $\big[\tfrac{1}{\sqrt{\gamma_1}},\tfrac{1}{\sqrt{\gamma_2}},\tfrac{1}{\sqrt{\gamma_3}},\tfrac{1}{\sqrt{\gamma_4}}\big]$,
while in the last row the results are averaged over subband SDs $\{1/\sqrt{\gamma_\ell}\}_{\ell=1}^4$ drawn independently from a uniform distribution over the interval $[0,50/255]$.
The fourth row corresponds to white Gaussian noise with \textb{a} fixed standard deviation of $10$, while all other rows correspond to colored noise.
The fifth row corresponds to noise that is non-Gaussian in general, but Gaussian when conditioned on $\vec{\gamma}$.
All results in the table represent the average over $500$ different noise realizations.
The results in \tabref{results_denoiser} are summarized as follows.
\begin{itemize}
\item
As expected, white DnCNN performs relatively poorly for all test cases except that in the fourth row, where the testing noise was white, and that in the third row, where the testing noise was lightly colored.
In the fourth row, white DnCNN performs slightly worse than genie DnCNN, which is expected because white DnCNN was trained using white noise with SDs in the range $[0,50/255]$, while genie DnCNN was trained using a white noise with a fixed SD that exactly matches the test noise. 
\item
As expected, genie DnCNN is the best method in the first four rows.
In all of those cases, genie DnCNN is specialized to handle exactly the noise distribution used for the test, and thus is impractical. 
%GENIE CASE
By definition, genie DnCNN is not applicable to the fifth row.
%MATCHED
%In the fifth row, \textr{matched DnCNN performs worse than corr+corr and Metzler's DnCNN} because the latter two denoisers are given side information about the statistics of the noise corrupting each sample, whereas matched DnCNN is not.
\item
Metzler's DnCNN performs relatively well in the first two rows, but relatively poorly in the second two rows.
We believe that the inconsistency is the result of the fact that the DNN operates in the pixel domain, while the SD side information is given in the wavelet domain and the DNN is given no information about the wavelet transform itself.
\item
The proposed corr+corr outperforms Metzler's DnCNN in all cases and is only $0.3$ to $0.5$~dB away from the genie DnCNN.
This is notable because genie DnCNN gives an (impractical) upper bound on the performance achievable with the chosen architecture and training method.
%MATCHED
%In the fifth row, corr+corr outperforms the matched denoiser because it has access to side-information about the statistics of the noise that corrupt each sample, i.e., $\vec{\gamma}$.
\end{itemize}

\begin{table}[t]
	\centering
	\caption{Performance comparison of four different DnCNN denoisers for various cases of colored noise}
\blue
	\resizebox{1.0\columnwidth}{!}{
	\begin{tabular}{@{}|c||cc|cc|cc||cc|}\hline
		test standard deviations & \multicolumn{2}{c|}{white DnCNN} & \multicolumn{2}{c|}{Metzler's DnCNN}& \multicolumn{2}{c||}{corr+corr DnCNN} & \multicolumn{2}{c|}{genie DnCNN} \\
		$\big[\frac{1}{\sqrt{\gamma_1}},\frac{1}{\sqrt{\gamma_2}},\frac{1}{\sqrt{\gamma_3}},\frac{1}{\sqrt{\gamma_4}}\big]$  & PSNR $\pm$ SE   & SSIM $\pm$ SE & PSNR $\pm$ SE & SSIM $\pm$ SE & PSNR $\pm$ SE & SSIM $\pm$ SE & PSNR $\pm$ SE & SSIM $\pm$ SE \\ \hline
		{$\big[\frac{48}{255},\frac{47}{255},\frac{6}{255},\frac{19}{255}\big]$}  & 25.36 $\pm$ 0.02 & 0.7328 $\pm$ 0.0013 & 31.23 $\pm$ 0.03 & 0.8783 $\pm$ 0.0006 & 31.69 $\pm$ 0.03 & 0.8899 $\pm$ 0.0005 & 32.12 $\pm$ 0.04 & 0.9012 $\pm$ 0.0005 \\ 
		{$\big[\frac{10}{255},\frac{40}{255},\frac{23}{255},\frac{14}{255}\big]$}   & 32.44 $\pm$ 0.03 & 0.9044 $\pm$ 0.0006 & 34.87 $\pm$ 0.04 & 0.9363 $\pm$ 0.0004 & 35.24 $\pm$ 0.04 & 0.9407 $\pm$ 0.0004 & 35.54 $\pm$ 0.04 & 0.9449 $\pm$ 0.0004 \\ 
		{$\big[\frac{13}{255},\frac{7}{255},\frac{8}{255},\frac{10}{255}\big]$}   & 36.50 $\pm$ 0.03 & 0.9421 $\pm$ 0.0003 & 31.03 $\pm$ 0.03 & 0.9359 $\pm$ 0.0003 & 37.02 $\pm$ 0.03 & 0.9535 $\pm$ 0.0003 & 37.41 $\pm$ 0.03 & 0.9569 $\pm$ 0.0003 \\ 
		{$\big[\frac{10}{255},\frac{10}{255},\frac{10}{255},\frac{10}{255}\big]$} & 37.41 $\pm$ 0.03 & 0.9571 $\pm$ 0.0003 & 31.94 $\pm$ 0.02 & 0.9413 $\pm$ 0.0003 & 37.31 $\pm$ 0.03 & 0.9559 $\pm$ 0.0003 & 37.63 $\pm$ 0.03 & 0.9586 $\pm$ 0.0003 \\ 
		{$\big[0$-$\frac{50}{255},0$-$\frac{50}{255},0$-$\frac{50}{255},0$-$\frac{50}{255}\big]$}  & 31.07 $\pm$ 0.05 & 0.8597 $\pm$ 0.0013 & 33.24 $\pm$ 0.05 & 0.9132 $\pm$ 0.0006 & 34.08 $\pm$ 0.05 & 0.9213 $\pm$ 0.0006 & n/a & n/a \\  \hline
	\end{tabular}}
\color{black}
	\label{tab:results_denoiser}
\end{table}

% MATCHED PERFORMANCE: PSNR=35.03, SSIM=0.9161

Code for our corr+corr experiments can be found at \href{https://github.com/Saurav-K-Shastri/corr-plus-corr}{https://github.com/Saurav-K-Shastri/corr-plus-corr}.

\subsection{Example D-GEC behavior in multicoil MRI \textb{with a 2D line mask}} \label{sec:example}

In this section, we demonstrate the typical behavior of D-GEC when applied to multicoil MRI image recovery \textb{with a 2D line mask; experiments with a 2D point mask will be presented in \secref{multi_point}.}
The full details of our multicoil experimental setup are given in \iftoggle{include_app}{\appref{multi}}{Appendix C-A}.
One of our main goals is to demonstrate that D-GEC's denoiser input error behaves as in \eqref{dgec_e2}, i.e., that the error in each wavelet band is white and Gaussian with a predictable variance. 
For the experiments in this section, we used 
the corr+corr denoiser proposed in \secref{corr+corr}, 
a signal-to-noise ratio (SNR) of $40$ dB, and
an acceleration of $R=4$.
Code for our D-GEC experiments can be found at \href{https://github.com/Saurav-K-Shastri/D-GEC}{https://github.com/Saurav-K-Shastri/D-GEC}.

Before discussing our results, there is one peculiarity to multicoil MRI that should be explained.
In practice, both the coil-sensitivity maps $\{\vec{s}_c\}_{c=1}^C$ in $\vec{A}$ from \eqref{A} and the image $\vec{x}\true$ in \eqref{y} are unknown.
The standard recovery approach is to first use an algorithm like ESPIRiT \cite{Uecker:MRM:14} to estimate the coil maps $\{\vec{s}_c\}_{c=1}^C$, then plug the estimated maps into the $\vec{A}$ matrix, and finally solve the inverse problem with the estimated $\vec{A}$ to recover $\vec{x}\true$.
One complication with ESPIRiT is that, in pixel regions where the true image $\vec{x}\true$ is zero or nearly zero (e.g., the outer regions of many MRI images), the ESPIRiT-estimated coil maps can be uniformly zero-valued,
\textb{depending on how ESPIRiT is configured}.
In other words, there \textb{may} exist pixels $n$ such that $[\vec{s}_c]_n=0~\forall c=1\dots C$, which causes the corresponding columns of $\vec{A}$ to be zero.
\textb{In our experiments, we use the default ESPIRiT parameters from the SigPy implementation\footnote{https://sigpy.readthedocs.io/en/latest/generated/sigpy.mri.app.EspiritCalib.html.} and find such zero-valued regions do occur.}
Although the presence of zero-valued columns in $\vec{A}$ might appear to make the inverse problem \eqref{y} more difficult, the (known) coil-map estimates can be exploited as side-information to tell the algorithm which pixels in $\vec{x}\true$ are nearly zero-valued.
\emph{Consequently, in our multicoil experiments, for all algorithms, we set those pixels of the recovered image $\hvec{x}$ to zero wherever the estimated coil maps are uniformly zero.}
In the sequel, we will refer to the pixel region with zero-valued coil map estimates as the ``zero-coil region.'' 

\begin{figure}[t]
    \centering
    \includegraphics[width =\linewidth]{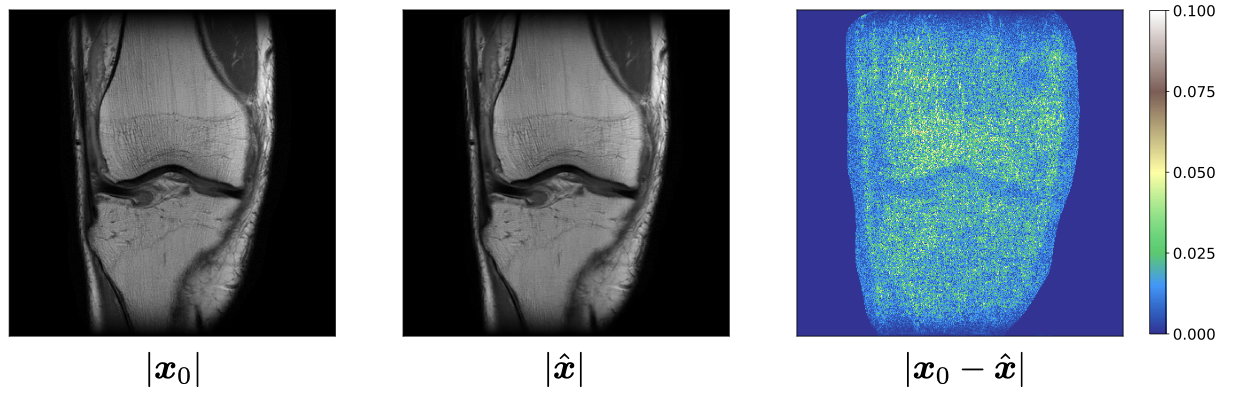}
    \caption{Example multicoil knee image recovery: True image magnitude $|\vec{x}\true|$, D-GEC's recovered image magnitude $|\hvec{x}|$ at iteration $20$, and the error magnitude $|\vec{x}\true - \hvec{x}|$, for $R=4$ and measurement SNR $=40$ dB.}
    \label{fig:GT_Recon_error_knee_R4}
\end{figure}

\begin{figure}[t]
    \centering
    \includegraphics[width = \linewidth]{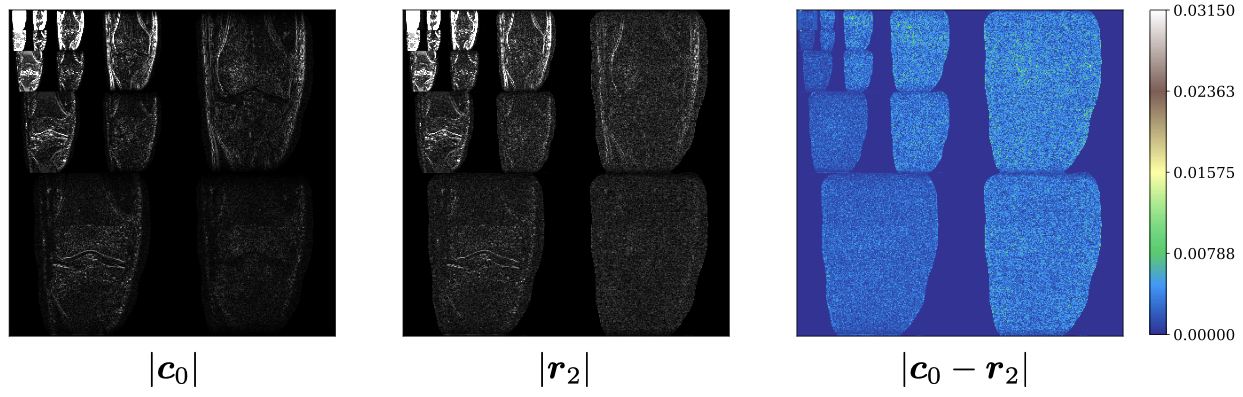}
    \caption{Example multicoil knee image recovery: True wavelet coefficient magnitude $|\vec{c}_0|$, D-GEC's denoiser-input magnitude $|\vec{r}_2|$ at iteration $10$, and the error magnitude $|\vec{c_0} - \vec{r}_2|$, for $R=4$ and measurement SNR $=40$ dB.}
    \label{fig:Wavelets_and_error_knee_R4}
\end{figure}

For a typical MRI knee image, \figref{GT_Recon_error_knee_R4} shows the magnitude $|\vec{x}\true|$ of the true image, D-GEC's recovery $|\hvec{x}|$ after $20$ iterations, and the error magnitude $|\hvec{x}-\vec{x}\true|$.
The error is exactly zero in the previously defined zero-coil region because both $\vec{x}\true$ and $\hvec{x}$ are zero-valued there.
The PSNR $\defn 10\log_{10}[(N \max_n |[\vec{x}\true]_n|^2 )/\|\hvec{x}-\vec{x}\true\|^2]$
% , rSNR $\defn\|\vec{x}\true\|^2/\|\hvec{x}-\vec{x}\true\|^2$, 
and SSIM \cite{Wang:TIP:04} values for this example reconstruction were $36.87$ dB and $0.9397$, respectively. 

\figref{Wavelets_and_error_knee_R4} shows the magnitude $|\vec{c}\true|$ of the corresponding true wavelet coefficients, the magnitude $|\vec{r}_2|$ of the noisy signal entering the D-GEC denoiser at iteration $10$, and the error magnitude $|\vec{r}_2-\vec{c}\true|$. 
The wavelet subbands are visible as the image tiles in these plots.
Here again, we see zero-valued error in the zero-coil region.
As anticipated from \eqref{dgec_e2}, the error maps look like white noise outside the zero-coil region of each wavelet subband, with an error variance that varies across subbands.

To verify the Gaussianity of the wavelet subband errors, \figref{QQ_plots_iter1} shows quantile-quantile (QQ) plots of the real and imaginary parts of the error $\vec{c}\true-\vec{r}_2$ outside the zero-coil region of several wavelet subbands at iteration \textb{$1$, and \figref{QQ_plots_iter10} shows the same at iteration $10$.} 
These QQ-plots suggest that the subband errors are indeed Gaussian \textb{at all iterations}. 

To show that the subband precisions $\vec{\gamma}_2$ predicted by D-GEC match the empirical subband precisions in the error vector $\vec{e}_2$, \figref{gamma_2_evolution} plots the $\ell$th subband SD $1/\sqrt{\gamma_\ell}$ versus iteration, along with the SDs empirically estimated from $\vec{c}\true-\vec{r}_2$, for several subbands $\ell$ and a typical run of the algorithm.  
It can be seen that the predicted SDs are in close agreement with the empirically estimated SDs.

\textb{Finally, to verify that the errors $\vec{c}\true-\vec{r}_2$ are zero-mean in each subband of each validation image, we performed a t-test \cite{Walpole:Book:16} using a significance level of $\alpha=0.05$ (i.e., if the errors were truly zero mean then the test would fail with probability $\alpha$).
At the first iteration, we ran a total of $208$ tests (one for each of the $13$ subbands in each of the $16$ knee validation images at $R=4$ and SNR $=40$ dB) and found that $11$ tests rejected the zero-mean hypothesis, which is consistent with $\alpha=0.05$ since $11/208=0.0529\approx 0.05$.
At the $10$th iteration, $12$ tests rejected the zero-mean hypothesis, which is again consistent with $\alpha=0.05$. %since $12/208=0.0576$.
}

\begin{figure}[t]
    \centering
    \includegraphics[width = 0.5\linewidth]{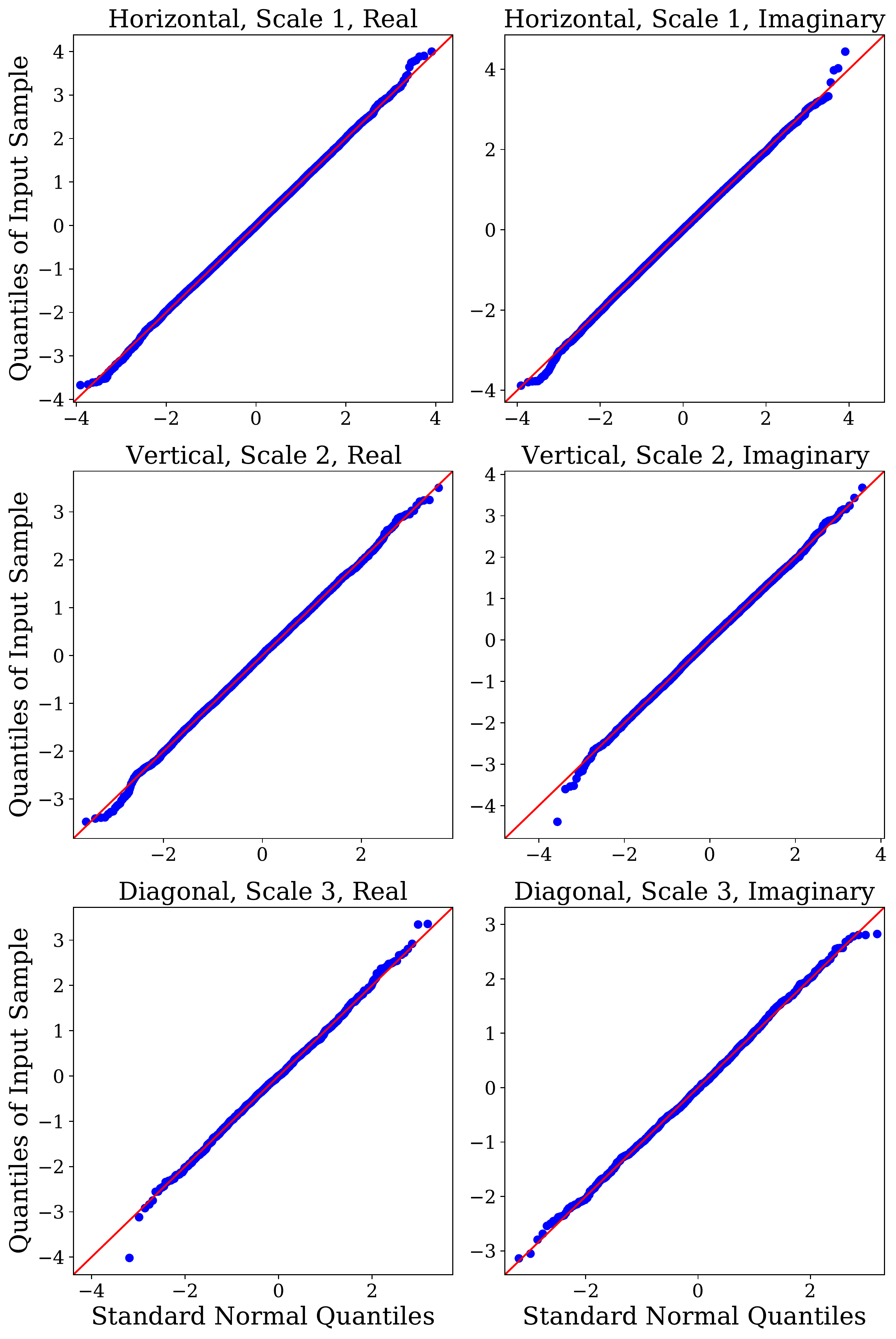}
    \caption{QQ-plots of the real and imaginary parts of D-GEC's subband errors $\vec{c}\true-\vec{r}_2$ at iteration $1$.}
    \label{fig:QQ_plots_iter1}
\end{figure}

\begin{figure}[t]
    \centering
    \includegraphics[width = 0.5\linewidth]{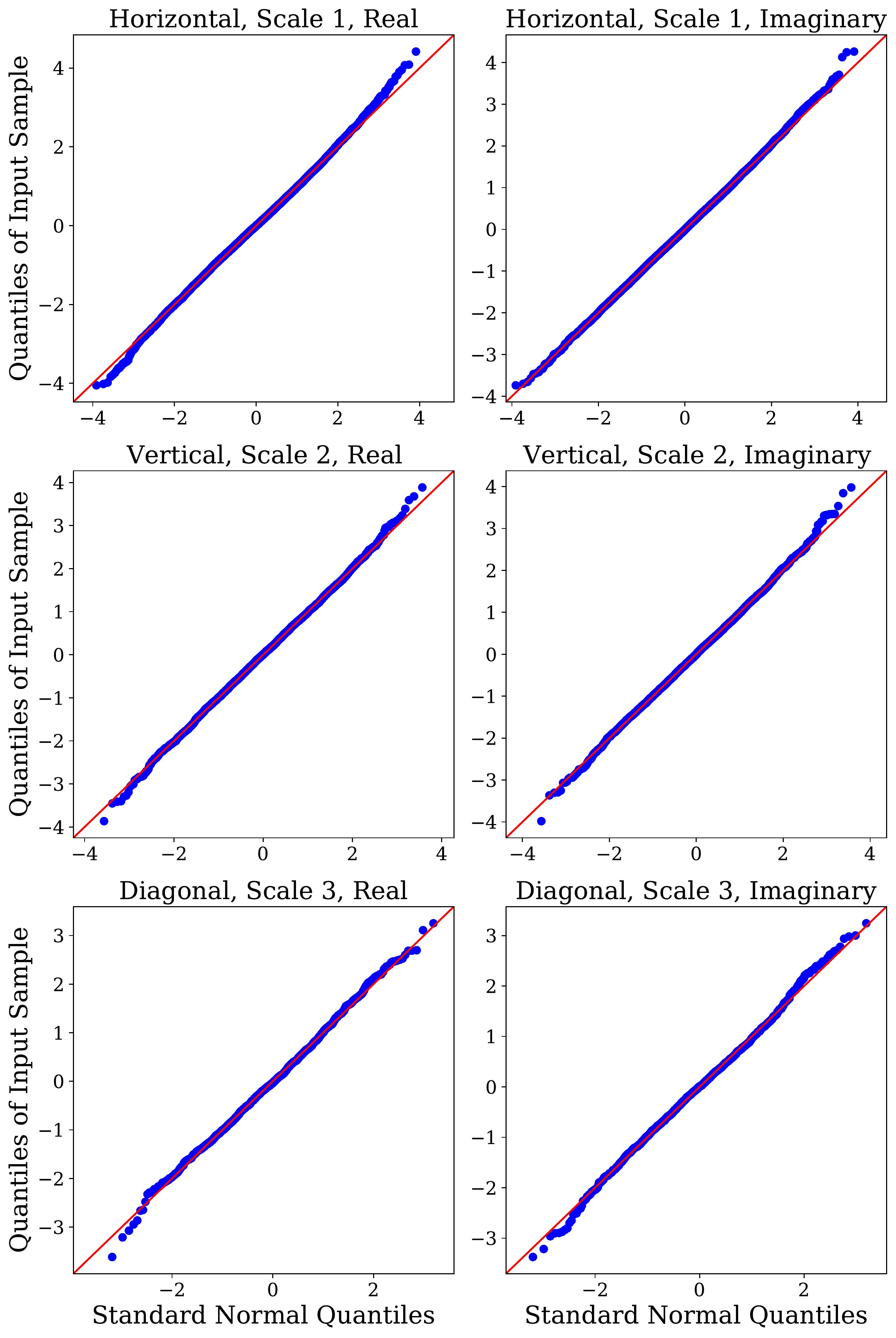}
    \caption{QQ-plots of the real and imaginary parts of D-GEC's subband errors $\vec{c}\true-\vec{r}_2$ at iteration $10$.}
    \label{fig:QQ_plots_iter10}
\end{figure}

\begin{figure}[t]
    \centering
    \includegraphics[width = \linewidth]{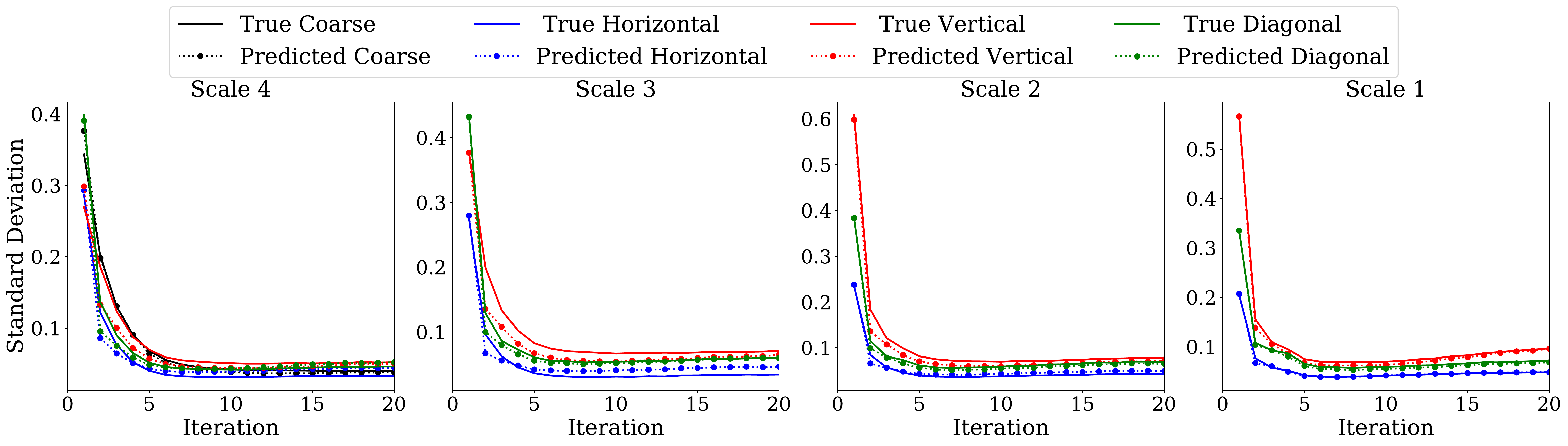}
    \caption{Evolution of D-GEC's predicted subband SDs ($1/\sqrt{\gamma_\ell}$) and empirically estimated subband SDs (from $\vec{c}\true - \vec{r}_2$) for several subbands $\ell$ over 20 iterations.}
    \label{fig:gamma_2_evolution}
\end{figure}

\subsection{Multicoil MRI algorithm comparison \textb{with a 2D point mask}} \label{sec:multi_point}

In this section, we compare the performance of D-GEC to \textb{two state-of-the-art algorithms for multicoil MRI image recovery: P-VDAMP \cite{Millard:22} and PnP-PDS \cite{Ono:SPL:17}.
%(We were not able to find an implementation of DP-VDAMP \cite{Millard:Diss:21,Millard:ISMRM:22} to compare to.)
We use 2D point masks in this section out of fairness to P-VDAMP, which was designed around 2D point masks. 
Multicoil experiments with 2D line masks are presented in \secref{multi_line}, and single-coil experiments are presented in \secref{single}.}
We examine two acceleration rates, $R=4$ and $R=8$, and several measurement SNRs between $20$ and $45$~dB.
As before, we quantify recovery performance using PSNR and SSIM. 
For this section, we used both knee and brain fastMRI data. 
The details of the experimental setup are given in \iftoggle{include_app}{\appref{multi}}{Appendix C-A}.

For P-VDAMP, we ran the \textb{authors'} code from \cite{Millard:22} under its default settings.
For PnP-PDS, we used a bias-free DnCNN \cite{Mohan:ICLR:20} denoiser trained to minimize \textb{$\ell_2$} loss when removing WGN with an SD uniformly distributed in the interval $[0,55/255]$.
This bias-free network is known to perform very well over a wide SD range, and so there is no advantage in training multiple denoisers over different SNR ranges \cite{Mohan:ICLR:20}.  
Because PnP-PDS \textb{performance} strongly depends on the chosen penalty parameter \textb{and number of PDS iterations, we separately tuned these parameters for every combination of} measurement SNR and acceleration rate to maximize PSNR on the training set. 
For D-GEC, we used a Haar wavelet transform of depth $D=4$, which yields $L=13$ subbands, and a corr+corr bias-free DnCNN denoiser; see \iftoggle{include_app}{\appref{multi}}{Appendix C-A} for additional details.
For all algorithms, we set the image estimate to zero in the zero-coil region.

For each acceleration rate $R$ and SNR under test, we ran all three algorithms on all images in the brain and knee testing sets. 
We then computed the average PSNR and SSIM values across those images and summarized the results in \figref{SNRvsMetric_all}, \textb{using error bars to show plus/minus one standard error}. 
The figure shows that D-GEC significantly outperformed the other algorithms in all metrics at all combinations of $R$ and measurement SNR.

\begin{figure}[t]
    \centering
    \includegraphics[width = \linewidth]{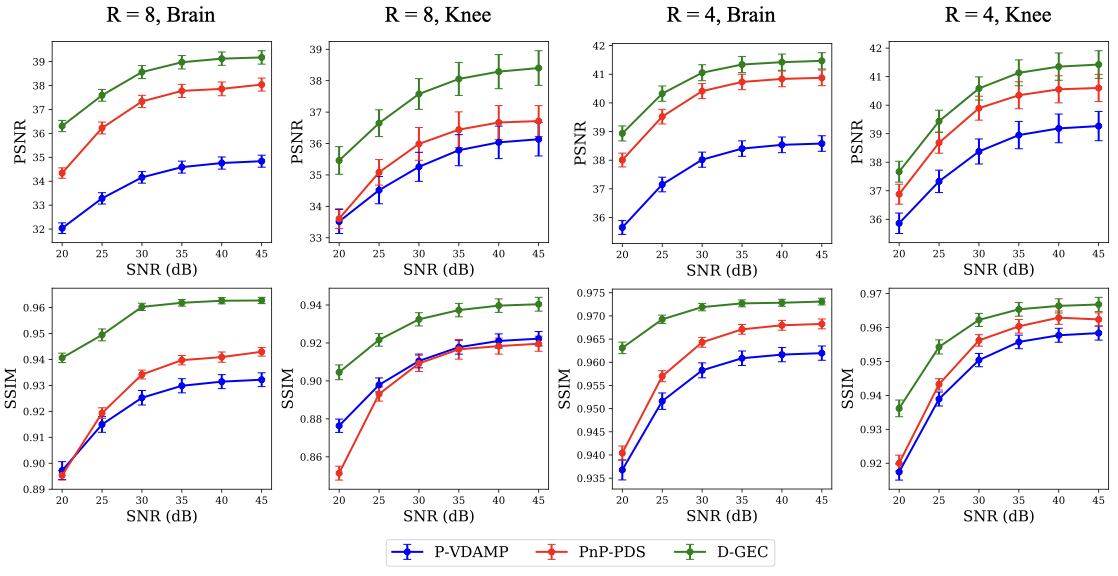}
    \caption{Average PSNR and SSIM versus measurement SNR for P-VDAMP, PnP-PDS, and D-GEC.}
    \label{fig:SNRvsMetric_all}
\end{figure}

\Figref{multi_coil_recon_comparison} shows image recoveries and error images for a typical fastMRI brain image at acceleration $R=4$ and measurement SNR $=35$~dB. 
In this case, D-GEC outperformed the P-VDAMP and PnP-PDS algorithms in PSNR by $2.6$ and $0.76$ dB, respectively.
Furthermore, D-GEC's error image looks the least structured.
Looking at the details of the zoomed plots, we see that D-GEC is able to reconstruct certain fine details better than its competitors. 

\Figref{PSNRvsIter_multicoil} shows PSNR versus iteration for the three algorithms at $R=4$ and SNR $=20$~dB. %\textm{At the beginning of the subsection, I see the range of SNR to be between 25 and 45 dB.}
The PSNR values shown are the average over all $16$ test images from the brain MRI dataset. 
The plot shows P-VDAMP, D-GEC, and PnP-PDS taking about $7$, $8$, and $25$ iterations to converge, respectively.
If we measure the number of iterations taken to reach $35$~dB SNR, then D-GEC, PnP-PDS, and P-VDAMP take about $3$, $5$, and $7$ iterations, respectively. 

\begin{figure}[t]
    \centering
    \includegraphics[width = \linewidth]{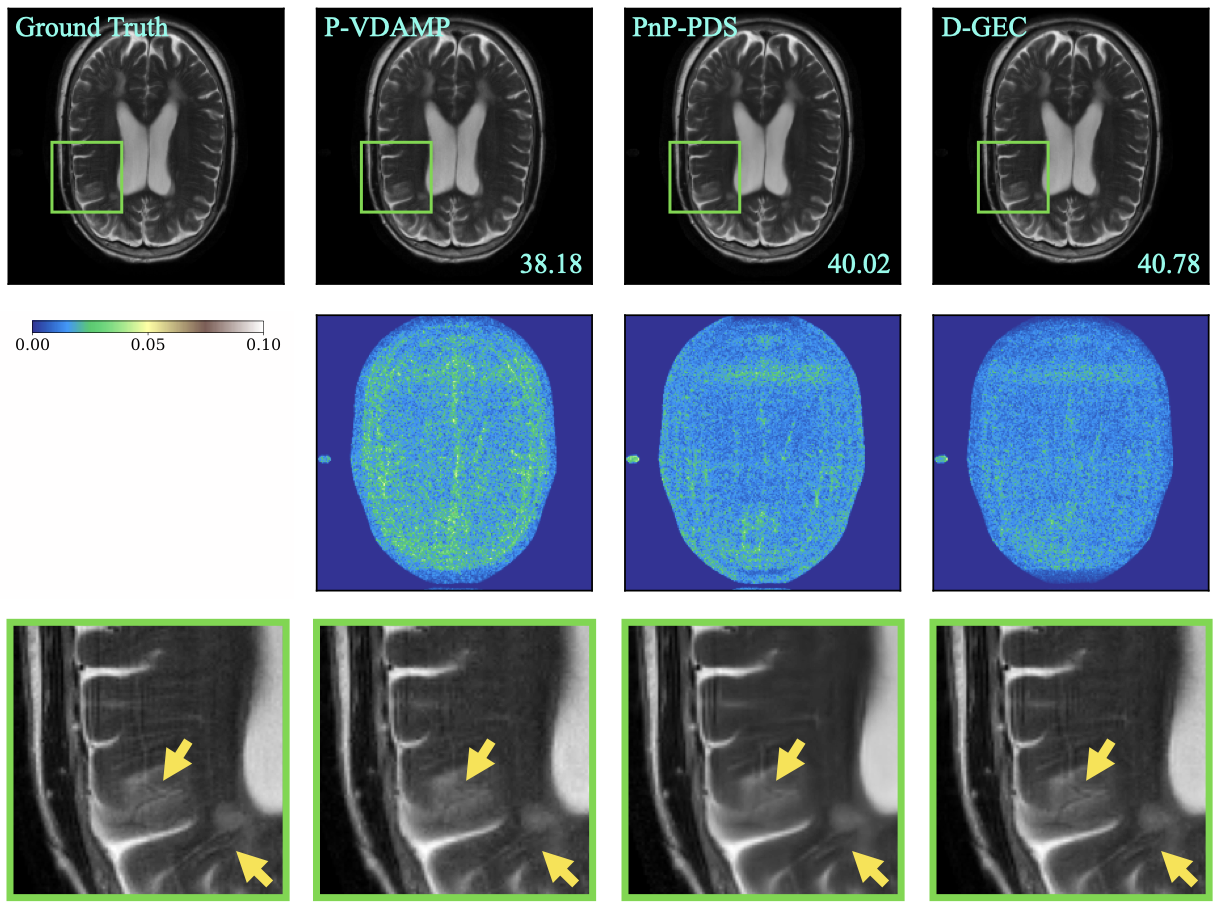}
    \caption{Example multicoil MRI image recoveries and error images at $R=4$ and SNR $=35$~dB. The number printed on each recovered image shows its PSNR. The bottom row is a zoomed in version of the green square in the top row.  This figure is best viewed in electronic form. }
    \label{fig:multi_coil_recon_comparison}
\end{figure}

\begin{figure}[t]
    \centering
    \includegraphics[width = 0.5\linewidth]{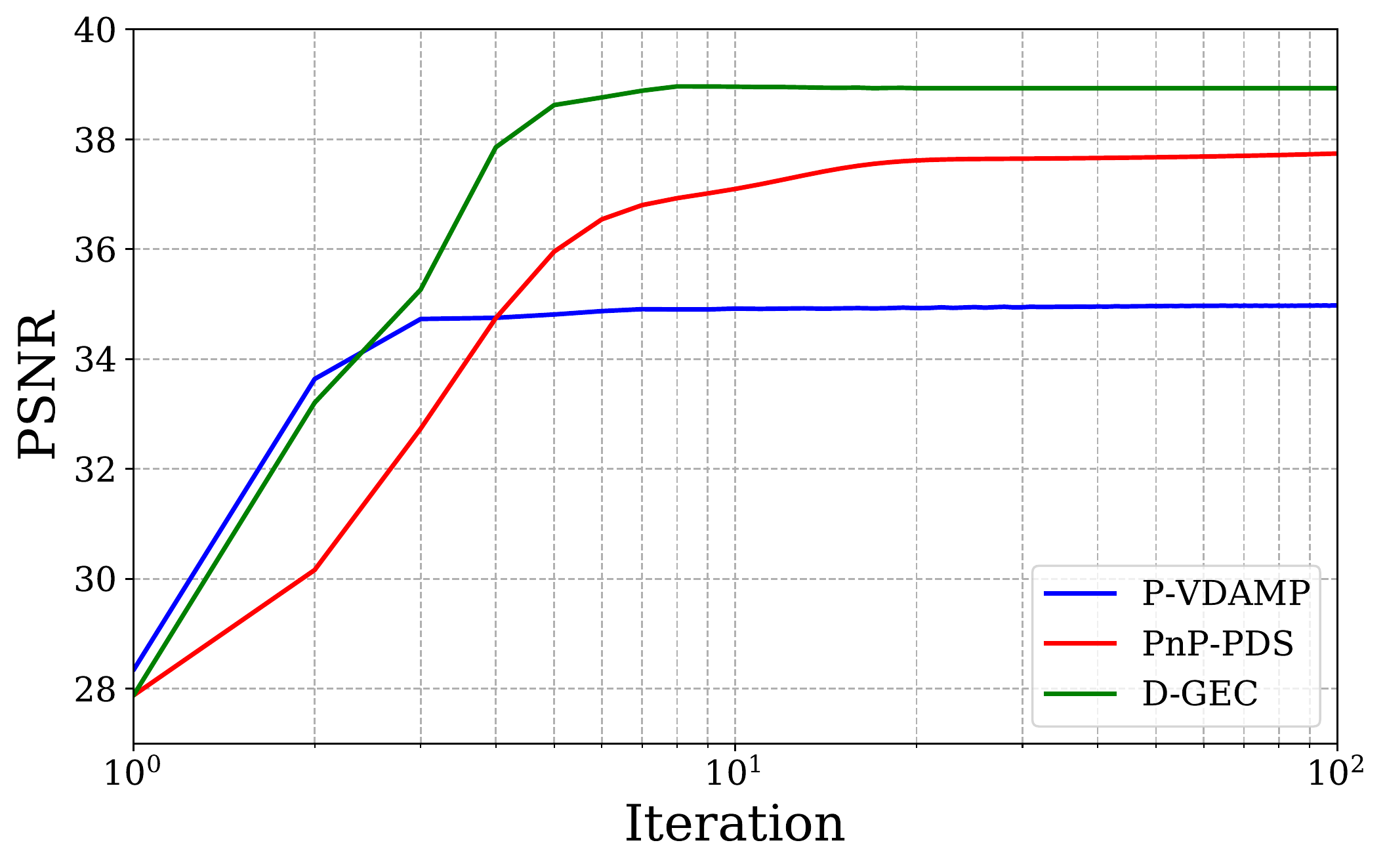}
    \caption{PSNR versus iterations for multicoil brain MRI recovery at $R=4$ and SNR $=20$~dB. PSNR was averaged over the $16$ test images.}
    \label{fig:PSNRvsIter_multicoil}
\end{figure}

\blue
\subsection{Multicoil MRI algorithm comparison with a 2D line mask} \label{sec:multi_line}

In this section, we compare the performance of D-GEC to that of P-VDAMP \cite{Millard:22} and PnP-PDS \cite{Ono:SPL:17} when using a 2D line mask.
We examine acceleration rates $R=4$ and $R=8$, and a measurement SNR of $40$~dB, on the fastMRI brain and knee datasets.
With the exception of the sampling mask, the experimental setup was identical to that in \secref{multi_point}.
Although \cite{Millard:22} states that P-VDAMP is not intended to be used for ``purely 2D acquisitions'' like that associated with a 2D line mask, we show P-VDAMP performance for completeness.
To run P-VDAMP, we gave it a 2D sampling density that was uniform along the fully sampled dimension and proportional to the 1D sampling density along the subsampled dimension (recall Figs.~\ref{fig:mask}(c)-(d)).

\tabref{results_MC} shows PSNR and SSIM averaged over the test images with the corresponding standard errors. 
There it can be seen that D-GEC significantly outperformed the other techniques on both datasets at both acceleration rates. 
For example, D-GEC outperformed its closest competitor, PnP-PDS, by $2.54$ and $1.32$ dB at $R=4$ and $R=8$, respectively, on the knee data.

\begin{table}[t]
\blue
	\centering
	\caption{\textb{Multicoil 2D line-mask results at SNR $=40$ dB averaged over all test images.}}
	\resizebox{\columnwidth}{!}{
	\begin{tabular}{@{}|c|cc|cc|cc|cc|}\hline
	    & \multicolumn{4}{c|}{Knee} & \multicolumn{4}{c|}{Brain}\\ \hline
		& \multicolumn{2}{c|}{$R = 4$} & \multicolumn{2}{c|}{$R = 8$} & \multicolumn{2}{c|}{$R = 4$} & \multicolumn{2}{c|}{$R = 8$} \\
		method  & PSNR $\pm$ SE  & SSIM $\pm$ SE & PSNR $\pm$ SE  & SSIM $\pm$ SE & PSNR $\pm$ SE  & SSIM $\pm$ SE & PSNR $\pm$ SE  & SSIM $\pm$ SE \\ \hline
		P-VDAMP \cite{Millard:22} & 33.84 $\pm$ 0.40 & 0.9018 $\pm$ 0.0036 &  20.34  $\pm$  0.46 & 0.5614  $\pm$ 0.0051 &  30.30  $\pm$  0.16 & 0.8847  $\pm$ 0.0021 &  13.51  $\pm$  0.26 & 0.4763  $\pm$ 0.0069 \\ 
		PnP-PDS \cite{Ono:SPL:17} & 36.28 $\pm$ 0.38  & 0.9204 $\pm$ 0.0028 & 32.34 $\pm$ 0.32  & 0.8556 $\pm$ 0.0040 &  38.07  $\pm$  0.23 & 0.9501  $\pm$ 0.0016 &  28.97  $\pm$  0.13 & 0.8269  $\pm$ 0.0031 \\
		D-GEC (proposed) & \textbf{38.82} $\pm$ 0.50  & \textbf{0.9504} $\pm$ 0.0023 &  \textbf{33.66} $\pm$ 0.28  & \textbf{0.8893} $\pm$ 0.0028 &  \textbf{39.04}  $\pm$  0.29 & \textbf{0.9631}  $\pm$ 0.0013 &  \textbf{30.61}  $\pm$  0.19 & \textbf{0.9015}  $\pm$ 0.0031  \\ \hline
	\end{tabular}}
	\label{tab:results_MC}
\end{table}

\color{black}
\subsection{Single-coil MRI algorithm comparison \textb{with a 2D point mask}} \label{sec:single}

In this section we compare the performance of D-GEC to several other recently proposed algorithms for single-coil MRI recovery \textb{using a 2D point mask}.
We examine two acceleration rates, $R=4$ and $R=8$, and a measurement SNR of $45$~dB.
For this section, we used the Stanford 2D FSE dataset \cite{Ong:ISMRM:18} with the test images in \figref{test_images}.
The details of the experimental setup are reported in \iftoggle{include_app}{\appref{single}}{Appendix C-B}.

We compared our proposed D-GEC algorithm to D-AMP-MRI \cite{Eksioglu:JIS:18}, VDAMP \cite{Millard:OJSP:20}, D-VDAMP \cite{Metzler:ICASSP:21}, and PnP-PDS \cite{Ono:SPL:17}. 
\textb{We used a 2D point mask out of fairness to VDAMP and D-VDAMP, which were designed around 2D point masks.
For VDAMP and D-VDAMP, we ran the authors' implementations} at their default settings. 
For D-AMP-MRI and PnP-PDS, we used a bias-free DnCNN \cite{Mohan:ICLR:20} denoiser trained to minimize the \textb{$\ell_2$} loss when removing WGN with SDs uniformly distributed in the interval $[0,55/255]$. 
This bias-free network is known to perform very well over a wide SD range, and so there is no advantage in training multiple denoisers over different SNR ranges \cite{Mohan:ICLR:20}. 
We ran the D-AMP-MRI and PnP-PDS algorithms for $50$ and $300$ iterations, respectively.  
Because the PnP fixed-points strongly depend on the chosen penalty parameter, we carefully tuned the PnP-PDS parameter at each acceleration rate $R$ to maximize PSNR on the validation set.
For D-GEC, we used a Haar wavelet transform of depth $D=4$, which yields $L=13$ subbands, and a corr+corr bias-free DnCNN denoiser; see \iftoggle{include_app}{\appref{single}}{Appendix C-B} for additional details.  

\tabref{results_SC} shows PSNR and SSIM averaged over the $10$ test images \textb{with the corresponding standard errors}.
There it can be seen that D-GEC significantly outperformed the other techniques at both tested acceleration rates. 
For example, D-GEC outperformed its closest competitor, PnP-PDS, by $1.81$ and $0.87$ dB at $R=4$ and $R=8$, respectively.

\begin{table}[t]
	\centering
	\caption{Single-coil image recovery results averaged over the ten test images.}
	\resizebox{0.75\columnwidth}{!}{
	\begin{tabular}{@{}|c|cc|cc|}\hline
		& \multicolumn{2}{c|}{$R = 4$} & \multicolumn{2}{c|}{$R = 8$} \\
		method  & PSNR \textb{$\pm$ SE}  & SSIM \textb{$\pm$ SE} & PSNR \textb{$\pm$ SE}  & SSIM \textb{$\pm$ SE} \\ \hline
		D-AMP-MRI \cite{Eksioglu:JIS:18} & 33.28 $\pm$ 4.62    & 0.7789 $\pm$ 0.0900 &  25.83 $\pm$ 4.33  & 0.7252 $\pm$ 0.1214  \\ 
		VDAMP \cite{Millard:OJSP:20} & 33.10 $\pm$ 1.30  & 0.8650 $\pm$ 0.0243  &  28.47  $\pm$ 0.96    & 0.7378 $\pm$ 0.0313  \\ 
		D-VDAMP \cite{Metzler:ICASSP:21} & 42.57 $\pm$ 1.48 & 0.9731 $\pm$ 0.0089 &  35.18  $\pm$  1.93 & 0.9023  $\pm$ 0.0248 \\ 
		PnP-PDS \cite{Ono:SPL:17} & 43.36 $\pm$ 1.60  & 0.9787 $\pm$ 0.0076 &  38.10 $\pm$ 1.75    & 0.9527 $\pm$ 0.0158\\
		D-GEC (proposed) & \textbf{45.17} $\pm$ 1.62  & \textbf{0.9824} $\pm$ 0.0066  &  \textbf{38.97} $\pm$ 1.76  & \textbf{0.9570} $\pm$ 0.0132  \\
                \hline
	\end{tabular}}
	\label{tab:results_SC}
\end{table}

\Figref{PSNRvsIter_single} shows PSNR versus iteration for several algorithms at $R=4$ and SNR $=45$~dB.
The PSNR value shown is the average over all $10$ test images in \figref{test_images}.
Two versions of D-VDAMP are shown in \figref{PSNRvsIter_single}: the standard version from \cite{Metzler:ICASSP:21}, which includes early stopping, and a modified version without early stopping.  
The importance of early stopping is clear from the figure.
The figure also shows that, for this single-coil dataset, D-GEC took more iterations to converge than the other algorithms but yielded a larger value of PSNR at convergence.
In the multicoil case in \figref{PSNRvsIter_multicoil}, D-GEC took an order-of-magnitude fewer iterations to converge.

\begin{figure}[t]
    \centering
    \includegraphics[width = 0.5\linewidth]{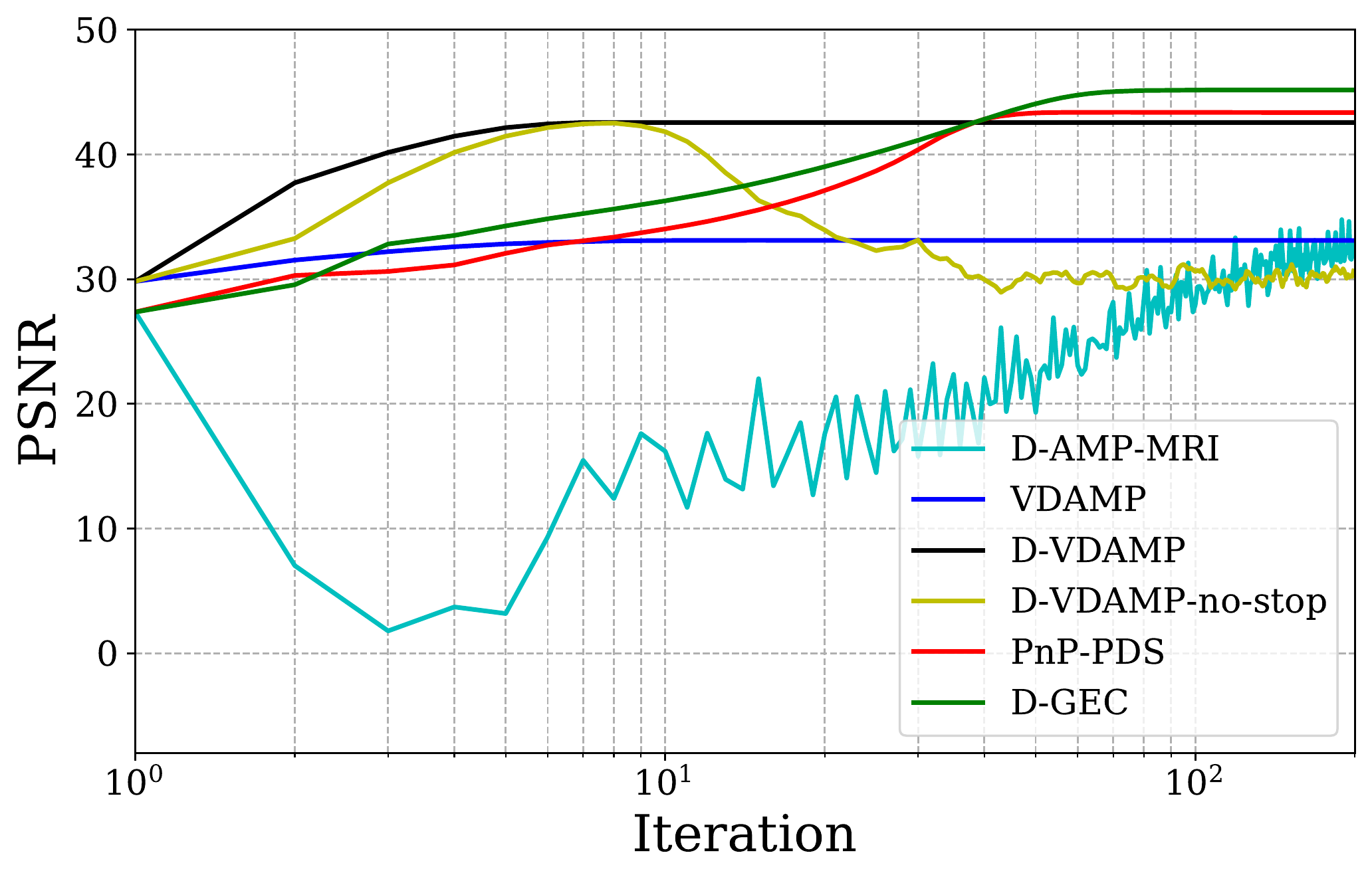}
    \caption{PSNR versus iterations for single-coil MRI recovery at $R=4$ and SNR $=45$~dB. PSNR was averaged over the $10$ test images in \figref{test_images}.}
    \label{fig:PSNRvsIter_single}
\end{figure}

\Figref{single_coil_recon_comparison} shows image recoveries for a typical Stanford 2D FSE MRI image at $R=4$ and measurement SNR $=45$~dB. 
For this experiment, D-GEC significantly outperformed the competing algorithms in PSNR, and its error image looks the least structured.
Also, the zoomed subplots show that D-GEC recovered fine details in the true image that are missed by its competitors.

\begin{sidewaysfigure}[p]
% \begin{figure}
    \centering
    \includegraphics[width = \linewidth]{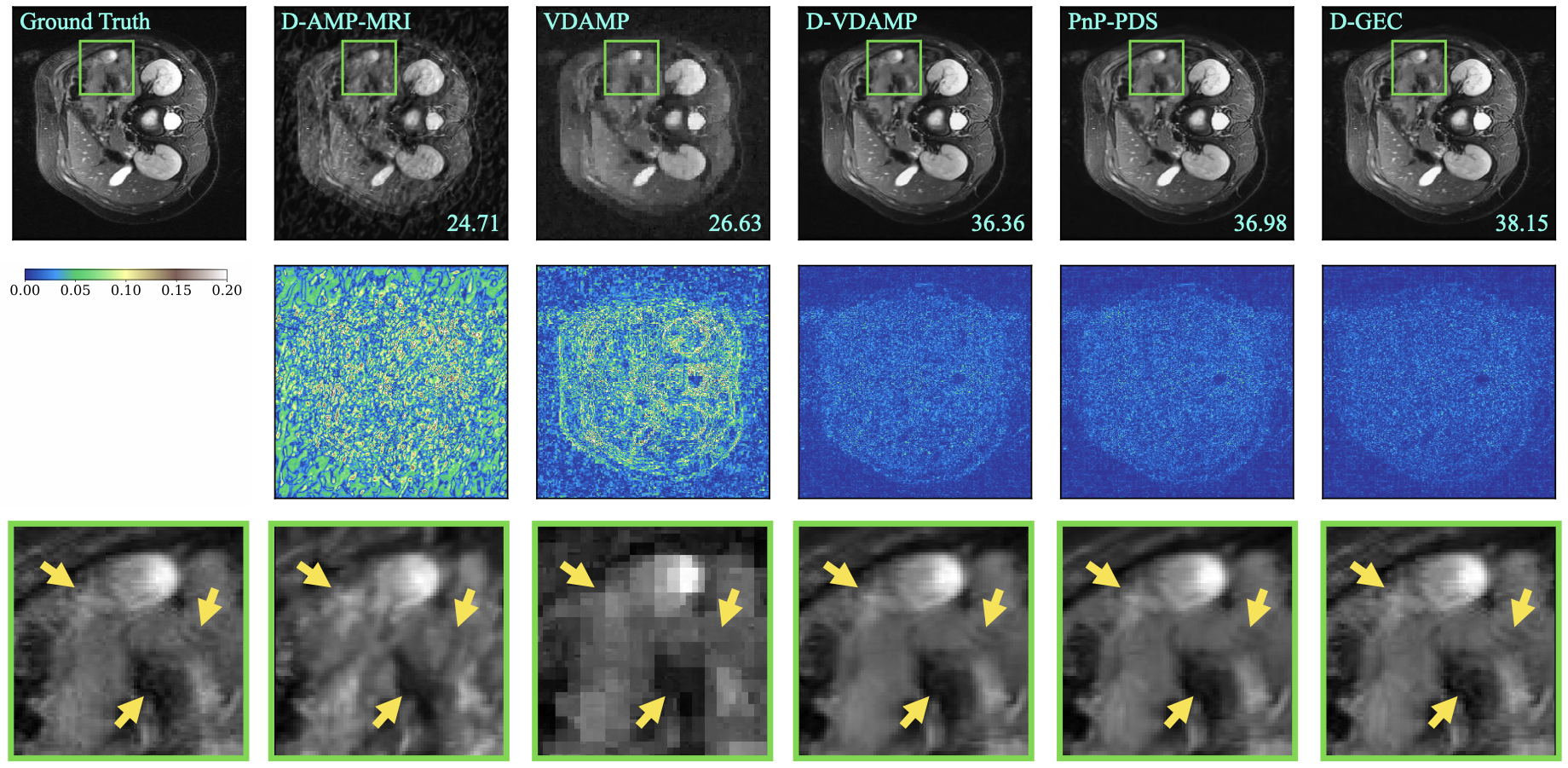}
    \caption{Example single-coil MRI image recoveries and error images at $R=4$ and SNR $=45$ dB.  The number printed on each recovered image shows its PSNR.  The bottom row is a zoomed in version of the green square in the top row. This figure is best viewed in electronic form.}
    \label{fig:single_coil_recon_comparison}
% \end{figure}
\end{sidewaysfigure}

%%%%%%%%%%%%%%%%%%%%%%%%%%%%%%%%%%%%%%%%%%%%%%%%%%%%%%%%%%%%%%%%%%%%%%%%%%%%%%%%

\section{Conclusion}

%For linear inverse problems, PnP algorithms are advantageous in that they 
\textb{PnP algorithms} 
require relatively few training images and are insensitive to deviations in the forward model $\vec{A}$ and measurement noise statistics between training and test.
However, PnP can be improved, because the denoisers typically used for PnP are trained to remove white Gaussian noise, whereas the denoiser input errors encountered in PnP are typically non-white and non-Gaussian.
In this paper, we proposed a new PnP algorithm, called Denoising Generalized Expectation-Consistent (D-GEC) approximation, to address this shortcoming for Fourier-structured $\vec{A}$ and Gaussian measurement noise.
In particular, D-GEC is designed to make the denoiser input error white and Gaussian within each wavelet subband with a predictable variance.
We then proposed a new DNN denoiser that is capable of exploiting the knowledge of those subband error variances.
Our ``corr+corr'' denoiser takes in a signal corrupted by correlated Gaussian noise, as well as independent realization(s) of the same correlated noise.
It then learns how to extract the statistics of the provided noise and then use them productively for denoising the signal. 
Numerical experiments with single- and multicoil MRI image recovery demonstrate that D-GEC does indeed provide the denoiser with subband errors that are white and Gaussian with a predictable variance.
Furthermore, the experiments demonstrate improved recovery accuracy relative to existing state-of-the-art PnP methods for MRI, \textb{especially with practical 2D line sampling masks}.
More work is needed to understand the theoretical properties of the proposed D-GEC and corr+corr denoisers.

%%%%%%%%%%%%%%%%%%%%%%%%%%%%%%%%%%%%%%%%%%%%%%%%%%%%%%%%%%%%%%%%%%%%%%%%%%%%%%%%
\clearpage
\bibliographystyle{IEEEtran}
\bibliography{macros_abbrev,phase,books,misc,comm,multicarrier,sparse,machine,mri}
%%%%%%%%%%%%%%%%%%%%%%%%%%%%%%%%%%%%%%%%%%%%%%%%%%%%%%%%%%%%%%%%%%%%%%%%%%%%%%%%

\iftoggle{include_app}{
\appendices

\section{EC/VAMP error recursion} \label{app:recursion}

In this appendix, we establish the error iteration
\begin{align}
\ebf_2
&= \Vbf\vec{D}\Vbf\herm \ebf_1 + \ubf 
\label{eq:ebf2_app}.
\end{align}
To begin, we write the estimation function $\vec{f}_1$ 
\iftoggle{include_app}{from \eqref{f1_ec_awgn}}{} as
\begin{align}
\lefteqn{
\vec{f}_1(\vec{r}_1;\gamma_1)
= \left(\gamma_w\vec{A}\herm\vec{A}+\gamma_1\vec{I}\right)^{-1}\left(\gamma_w\vec{A}\herm\vec{y} + \gamma_1\vec{r}_1\right) } \\
&= \rbf_1 + \left(\gamma_w\Abf\herm\Abf + \gamma_1\Ibf_N \right)^{-1} \left(\gamma_w\Abf\herm\ybf-\gamma_w\Abf\herm\Abf\rbf_1 \right) \\
%&= \rbf_1 + \gamma_w \left(\gamma_w\Abf\herm\Abf + \gamma_1\Ibf_N \right)^{-1} \Abf\herm (\ybf-\Abf\rbf_1) \\
&= \rbf_1 + \gamma_w \left( \Cbf + \gamma_1 \Ibf_N \right)^{-1} \Abf\herm (\ybf-\Abf\rbf_1)
\end{align}
for 
\begin{align}
\Cbf 
&\defn \gamma_w\Abf\herm\Abf 
\label{eq:Cbf} 
= \Vbf\Lambdabf\Vbf\herm .
\end{align}
The right side of \eqref{Cbf} is an eigendecomposition where $\Vbf\Vbf\herm=\Vbf\herm\Vbf=\Ibf$
and $\vec{\Lambda}=\Diag([\lambda_1,\dots,\lambda_N])$ is real-valued. 
Note also that $\Vbf$ is the right singular vector matrix of $\vec{A}$.
Using this eigendecomposition, we can write
\begin{align}
\lefteqn{
\tr(\nabla \vec{f}_1(\vec{r}_1;\gamma_1))
= \tr( \vec{I} - (\Cbf + \gamma_1 \Ibf_N)^{-1}\Cbf ) }\\
&= \tr( \vec{I} - (\Vbf\Lambdabf\Vbf\herm + \gamma_1 \Ibf_N)^{-1}\Vbf\Lambdabf\Vbf\herm ) \\
&= \tr( \vec{I} - (\Lambdabf + \gamma_1 \Ibf_N)^{-1}\Lambdabf ) \\
&= N - \sum_{n=1}^N \frac{\lambda_n}{\lambda_n+\gamma_1} \\
&= N(1-\alpha) \text{~~for~~} \alpha \defn \frac{1}{N}\sum_{n=1}^N \frac{\lambda_n}{\lambda_n+\gamma_1}
\label{eq:alpha} .
\end{align}
Thus, 
\iftoggle{include_app}{%
lines~\ref{line:ec_x1}-\ref{line:ec_eta1} of \algref{ec}
}{%
lines~4-5 of Alg.~1
}%
can be written as
\begin{align}
\hvec{x}_1
&= \rbf_1 + \gamma_w \left( \Cbf + \gamma_1 \Ibf_N \right)^{-1} \Abf\herm (\ybf-\Abf\rbf_1)
\label{eq:xbfhat1}\\
\eta_1 
%= \frac{1}{N}\tr\left[\left(\Cbf+\gamma_1\Ibf_N\right)^{-1}\right]
&= \frac{\gamma_1 N}{\tr(\nabla \vec{f}_1(\vec{r}_1;\gamma_1))}
= \frac{\gamma_1}{1-\alpha} 
%&= \frac{1}{N}\sum_{n=1}^N \frac{1}{\lambda_n+\gamma_1} 
\label{eq:eta1inv} 
\end{align}
and 
\iftoggle{include_app}{%
lines~\ref{line:ec_r2}--\ref{line:ec_gam2}
}{%
lines~7-8
}%
as
\begin{align}
\gamma_2
&= \eta_1 - \gamma_1 
= \gamma_1 \left(\frac{1}{1-\alpha}-1\right) 
= \gamma_1 \frac{\alpha}{1-\alpha} \\
\rbf_2
&= \frac{\eta_1\hvec{x}_1 - \gamma_1\vec{r}_1}{\gamma_2} 
%= \frac{\eta_1}{\gamma_2}\xbfhat_1 - \frac{\gamma_1}{\gamma_2}\rbf_1
= \frac{1}{\alpha}\xbfhat_1 - \frac{1-\alpha}{\alpha}\rbf_1
\label{eq:rbf2} .
\end{align}
Plugging \eqref{xbfhat1} into \eqref{rbf2}, we get
\begin{equation}
\rbf_2
= \rbf_1 + \frac{\gamma_w}{\alpha}\left( \Cbf + \gamma_1\Ibf_N \right)^{-1} \Abf\herm (\ybf-\Abf\rbf_1)
\label{eq:rbf1b} .
\end{equation}

Next, we express \eqref{rbf1b} in terms of the error vectors $\ebf_i\defn \rbf_i-\xbf\true$ for $i=1,2$.
Subtracting $\xbf\true$ from both sides of \eqref{rbf1b} and applying $\vec{y}=\vec{Ax}\true+\vec{w}$ 
\iftoggle{include_app}{from \eqref{y}}{} 
and the definition of $\vec{C}$ from \eqref{Cbf}, we get
\begin{align}
\ebf_2
&= \ebf_1 + \frac{\gamma_w}{\alpha}\left( \Cbf + \gamma_1\Ibf_N \right)^{-1} \Abf\herm (\Abf\xbf\true+\wbf-\Abf\rbf_1) \nonumber \\
&= \ebf_1 - \frac{1}{\alpha}\left( \Cbf + \gamma_1\Ibf_N \right)^{-1} \Cbf\ebf_1 + \ubf 
\label{eq:ebf2a} \\
&= \ebf_1 - \frac{1}{\alpha}\Vbf\left( \Lambdabf + \gamma_1\Ibf_N \right)^{-1} \Lambdabf\Vbf\herm \ebf_1 + \ubf \\
%&= \Vbf\left(\Ibf_N - \Sigmabf\right)\Vbf\herm \ebf_1 + \ubf 
&= \Vbf\vec{D}\Vbf\herm \ebf_1 + \ubf 
\label{eq:ebf2b} ,
\end{align}
where 
\begin{align}
\ubf
&\defn \frac{\gamma_w}{\alpha}\big( \Cbf + \gamma_1\Ibf_N \big)^{-1} \Abf\herm \wbf
\label{eq:ubf}\\
\vec{D} 
&\defn \vec{I}_N - \frac{1}{\alpha}\left( \Lambdabf + \gamma_1\Ibf_N \right)^{-1} \Lambdabf
\label{eq:Dbf} 
\end{align}
Notice that $\tr(\vec{D})=0$ due to the definition of $\alpha$ in \eqref{alpha}.

\section{EC/VAMP error analysis} \label{app:ebf2}

We start with the fact \cite{Collins:JMP:09} that, for any $N\geq 2$, the elements $v_{nj}$ of uniformly distributed orthogonal $\vec{V}\in\Real^{N\times N}$ obey
\begin{subequations}\label{eq:weingarten}
\begin{align}
\Exp(v_{nj}) 
&= 0 
\label{eq:Ev}\\
\Exp(v_{nj} v_{mk})
&= \tfrac{1}{N}\delta_{n-m}\delta_{j-k} 
\label{eq:Ev2}\\
\Exp(v_{nj}^2 v_{mk}^2)
&= \begin{cases}
\frac{3}{N(N+2)} & n=m ~\&~ j=k\\
\frac{1}{N(N+2)} & n=m ~\&~ j\neq k\\
\frac{1}{N(N+2)} & n\neq m ~\&~ j=k\\
\frac{N+1}{N(N+2)(N-1)} & n\neq m ~\&~ j\neq k 
\end{cases}
\label{eq:Ev4} ,
\end{align}
\end{subequations}
where $\delta_n$ is the Kronecker delta (i.e., $\delta_0=1$ and $\delta_n\big|_{n\neq 0}=0$).
Equations~\eqref{weingarten} will be used to establish the following lemma.

\begin{lemma} \label{lem:VDVe}
Suppose that $\fbf = \Vbf\Diag(\dbf)\Vbf\tran \ebf \in \Real^N$ where
$\dbf$ is deterministic with elements obeying 
$\sum_{j=1}^N d_j=0$ and 
$\mc{D}\defn\lim_{N\rightarrow\infty}\frac{1}{N}\sum_{j=1}^N d_j^2 < \infty$; 
$\ebf$ is random with elements of finite mean and variance obeying
$\varepsilon\defn\lim_{N\rightarrow\infty}\frac{1}{N}\sum_{j=1}^N e_{j}^2 < \infty$; 
and
$\Vbf$ is uniformly distributed over the set of orthogonal matrices and independent of $\ebf$ up to the fourth moment, i.e.,
$\Exp(v_{nj}v_{mk}v_{n'j'}v_{m'k'}|\ebf)=\Exp(v_{nj}v_{mk}v_{n'j'}v_{m'k'})$.
%
%Furthermore, suppose that $|d_n|$ and $|\epsilon_n|$ scale as $O(1)$.
Then, as $N\rightarrow\infty$,
\begin{align}
\Exp(\fbf|\ebf)
&= \zero 
\label{eq:Ef} \\
\Cov(\fbf|\ebf)
&= \varepsilon \mc{D} \Ibf_N 
\label{eq:Eff} .
\end{align}
\end{lemma}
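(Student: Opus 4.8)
The plan is to expand $\fbf=\Vbf\Diag(\dbf)\Vbf\tran\ebf$ entrywise as $f_n=\sum_{j,m} v_{nj}\,d_j\,v_{mj}\,e_m$ and then compute the conditional mean and covariance by averaging products of $\Vbf$-entries against the deterministic $d_j$ and the (conditioned, hence fixed) $e_m$, using the Haar-orthogonal moment identities in \eqref{weingarten}. Because the only randomness is in $\Vbf$, and $\Vbf$ is independent of $\ebf$ up to the fourth moment, each conditional expectation collapses to a moment of $\Vbf$ times fixed coefficients.

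First I would establish \eqref{Ef}. Using the second-moment identity \eqref{Ev2},
\begin{align}
\Exp(f_n|\ebf)
= \sum_{j,m} d_j e_m\,\Exp(v_{nj}v_{mj})
= \sum_{j,m} d_j e_m\,\tfrac{1}{N}\delta_{n-m}
= \frac{e_n}{N}\sum_{j} d_j
= 0,
\end{align}
where the final step invokes the zero-sum hypothesis $\sum_j d_j=0$. Thus $\Exp(\fbf|\ebf)=\zero$ holds exactly, for every $N$.

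Since the conditional mean vanishes, the entries of $\Cov(\fbf|\ebf)$ are $\Exp(f_nf_{n'}|\ebf)$, and I would expand
\begin{align}
\Exp(f_nf_{n'}|\ebf)
= \sum_{j,k,m,m'} d_jd_k\, e_me_{m'}\,\Exp(v_{nj}v_{mj}v_{n'k}v_{m'k}).
\end{align}
The key step is the fourth-moment structure: to leading order in $1/N$ the entries of a Haar-orthogonal matrix obey the Isserlis (Wick) pairing rule, consistent with \eqref{Ev4}, giving
\begin{align}
\Exp(v_{nj}v_{mj}v_{n'k}v_{m'k})
= \tfrac{1}{N^2}\big[\delta_{n-m}\delta_{n'-m'} + \delta_{j-k}\delta_{n-n'}\delta_{m-m'} + \delta_{j-k}\delta_{n-m'}\delta_{m-n'}\big] + o(N^{-2}).
\end{align}
Substituting the three pairing terms, the first produces $(\sum_j d_j)^2 e_ne_{n'}/N^2=0$ by the zero-sum condition; the second produces $\delta_{n-n'}\big(\tfrac{1}{N}\sum_j d_j^2\big)\big(\tfrac{1}{N}\sum_m e_m^2\big)\to \varepsilon\,\mc{D}\,\delta_{n-n'}$; and the third carries an extra factor $1/N$ and hence vanishes as $N\to\infty$. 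This yields $\Cov(\fbf|\ebf)\to\varepsilon\,\mc{D}\,\Ibf_N$, establishing \eqref{Eff}.

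The main obstacle is the rigorous justification of the fourth-moment pairing formula and the control of its subleading corrections. Because the covariance sum ranges over $O(N^4)$ index tuples while each summand is only $O(N^{-2})$, one must verify that the $o(N^{-2})$ Weingarten corrections do not accumulate to an $O(1)$ contribution; this is precisely where the detailed index bookkeeping of the orthogonal Weingarten calculus \cite{Collins:JMP:09} is required, rather than the four special cases recorded in \eqref{Ev4}. The zero-trace condition $\sum_j d_j=0$ is essential throughout: it annihilates the pairing that would otherwise dominate, leaving only the single diagonal term that survives to produce the white covariance $\varepsilon\,\mc{D}\,\Ibf_N$.
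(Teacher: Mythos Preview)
Your approach is essentially the paper's: expand $f_n$ entrywise and reduce the conditional moments to Haar fourth moments of $\Vbf$. The execution differs in one useful respect. You invoke the leading-order Isserlis/Wick expansion of $\Exp(v_{nj}v_{mj}v_{n'k}v_{m'k})$ and then flag the control of the $o(N^{-2})$ Weingarten remainder as the main obstacle, noting that the $O(N^4)$ index tuples could in principle amplify it. The paper sidesteps this concern by working the other way around: rather than approximating every fourth moment, it first uses the row/column sign-symmetry of Haar measure to argue that the fourth moment vanishes unless the index pattern reduces to one of the product-of-squares forms covered by \eqref{Ev4}, and then plugs in the \emph{exact} finite-$N$ values from \eqref{Ev4}. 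This collapses the quadruple sum to $O(N^2)$ nonzero terms with no remainder to track, and the $N\to\infty$ limits of the resulting explicit expressions are immediate (using $\sum_{j'\neq j}d_{j'}=-d_j$ from the zero-trace hypothesis). So your concern that the full Weingarten calculus of \cite{Collins:JMP:09} is needed is slightly overstated: the case enumeration in \eqref{Ev4}, together with sign-symmetry to kill the remaining patterns, suffices. Otherwise your outline---including the identification of which pairing survives and why $\sum_j d_j=0$ is essential---matches the paper's argument.
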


\begin{proof}
Writing the $n$th element of $\fbf$ as
\begin{align}
f_n
&= \sum_{j=1}^N v_{nj} d_j \sum_{k=1}^N v_{kj} e_{k}
\label{eq:fn}
\end{align}
we can establish \eqref{Ef} via
\begin{align}
\Exp(f_n|\ebf)
&= \sum_{j=1}^N \sum_{k=1}^N d_j e_{k} \Exp( v_{nj} v_{kj} | \ebf) \\
&\stackrel{(a)}{=} \sum_{j=1}^N \sum_{k=1}^N d_j e_{k} \delta_{n-k} \frac{1}{N}\\ 
&= e_{n} \frac{1}{N} \sum_{j=1}^N d_j 
\stackrel{(b)}{=} 0~\forall n ,
\end{align}
where (a) used \eqref{Ev2} and the assumed independence of $\Vbf$ and $\ebf$ and (b) used $\sum_j d_j=0$.

To establish \eqref{Eff}, we begin by using \eqref{fn} and the assumed independence of $\Vbf$ and $\ebf$ to write
\begin{align}
\Exp(f_n^2|\ebf)
&= \sum_{j}\sum_{k}\sum_{j'}\sum_{k'} d_j d_{j'} e_{k} e_{k'} 
        \Exp( v_{nj} v_{kj} v_{nj'} v_{k'j'}) .
\end{align}
When $k=n$, the expectation will vanish unless $k'=n$, and 
when $k\neq n$, the expectation will vanish unless $k'=k$ and $j'=j$.
Thus we have
\begin{align}
\lefteqn{ 
\Exp(f_n^2|\ebf) 
}\nonumber\\
&= e_{n}^2 \sum_{j}\sum_{j'} d_j d_{j'} \Exp( v_{nj}^2 v_{nj'}^2) 
   + \sum_{k\neq n}\sum_{j} d_j^2 k^2 \Exp( v_{nj}^2 v_{kj}^2) \\
&= e_{n}^2 \sum_{j} d_j^2 \Exp( v_{nj}^4 ) 
   + e_{n}^2 \sum_{j}\sum_{j'\neq j} d_j d_{j'} \Exp( v_{nj}^2 v_{nj'}^2) 
\nonumber\\&\quad
   + \sum_{k\neq n}\sum_{j} d_j^2 e_{k}^2 \Exp( v_{nj}^2 v_{kj}^2) \\
&\stackrel{(a)}{=} 
   \frac{3 e_{n}^2}{N(N+2)} \sum_j d_j^2
   + \frac{e_{n}^2}{N(N+2)} \sum_{j} d_j \sum_{j'\neq j} d_{j'}  
\nonumber\\&\quad
   + \frac{1}{N(N+2)} \sum_j d_j^2 \sum_{k\neq n} e_{k}^2 \\
&\stackrel{(b)}{=}
   \frac{e_{n}^2}{N+2} \big(\frac{1}{N}\sum_j d_j^2\big)
   + \frac{N}{N+2} \big(\frac{1}{N}\sum_j d_j^2\big)
   \big( \frac{1}{N}\sum_k e_{k}^2 \big) \\
&\stackrel{N\rightarrow\infty}{=} \mc{D} \varepsilon ,
\end{align}
where (a) used \eqref{Ev4} and where (b) used 
$\sum_{j'\neq j} d_{j'} = (\sum_{j'}d_{j'})-d_j = -d_j$ and
$\sum_{k\neq n} e_{k}^2 = \|\ebf\|^2-e_{n}^2$.
The limit as $N\rightarrow\infty$ follows from the definitions 
of $\mc{D}$ and $\varepsilon$, and the fact that 
$\lim_{N\rightarrow\infty} e_{n}^2/N=0$ due to the 
finite mean and variance of $e_{n}$.
%Finally, since $e_{n}^2=O(1)$ and $\|\dbf\|^2=O(N)$, the first term vanishes as $N\rightarrow\infty$ and the second term converges to $\|\dbf\|^2 \|\ebf\|^2/N^2$.
Thus we have established the diagonal terms in \eqref{Eff}.

The off-diagonal terms in \eqref{Eff} follow from analyzing
\begin{align}
\lefteqn{ 
\Exp(f_n f_m|\ebf)\big|_{n\neq m} 
}\nonumber\\
&= \sum_{j}\sum_{k}\sum_{j'}\sum_{k'} d_j d_{j'} e_{k} e_{k'} 
        \Exp( v_{nj} v_{kj} v_{mj'} v_{k'j'}) .
\end{align}
In this case, the expectation will vanish unless $k=n$ or $k=m$.
When $k=n$, we also need $k'=m$, and 
when $k=m$, we also need $k'=n$ and $j=j'$.
Thus we can write
\begin{align}
\lefteqn{ 
\Exp(f_n f_m|\ebf)\big|_{n\neq m} 
}\nonumber\\
&= e_{n} e_{m} \sum_{j}\sum_{j'} d_j d_{j'} \Exp( v_{nj}^2 v_{mj'}^2 ) 
   + e_{n} e_{m} \sum_{j} d_j^2 \Exp( v_{nj}^2 v_{mj}^2 ) \\
&= 2e_{n} e_{m} \sum_{j} d_j^2 \Exp( v_{nj}^2 v_{mj}^2 ) 
   + e_{n} e_{m} \sum_{j}\sum_{j'\neq j} d_j d_{j'} \Exp( v_{nj}^2 v_{mj'}^2 ) \\
&\stackrel{(a)}{=}
   \frac{2 e_{n} e_{m}}{N(N+2)} \sum_j d_j^2
   + \frac{e_{n} e_{m} (N+1)}{N(N+2)(N-1)} \sum_{j} d_j \sum_{j'\neq j} d_{j'} \\ 
&\stackrel{(b)}{=}
   \frac{N-3}{(N+2)(N-1)} e_{n} e_{m} \frac{1}{N}\sum_j d_j^2 \\
&\textb{\stackrel{(c)}{=} O(1/N)}
\stackrel{N\rightarrow\infty}{=} 0,
\end{align}
where (a) used \eqref{Ev4},
(b) used $\sum_{j'\neq j} d_{j'} = (\sum_{j'}d_{j'})-d_j = -d_j$,
\textb{and (c) used $\frac{1}{N}\sum_j d_j^2=O(1)$ from the definition of $\mc{D}$
and $e_n e_m = O(1)$ from the finite mean and variance of $e_n$.}
This establishes the off-diagonal terms in \eqref{Eff}.
\end{proof}

Lemma~\ref{lem:VDVe} will now be used to establish 
\begin{align}
\Exp(\ebf_2|\ebf_1) 
&\stackrel{N\rightarrow\infty}{=} \zero 
\label{eq:E_ebf2}\\
\Cov(\ebf_2|\ebf_1) 
&\stackrel{N\rightarrow\infty}{=} \varepsilon_2 \Ibf
\label{eq:Cov_ebf2} 
\end{align}
for some $\varepsilon_2>0$.
To simplify the derivation, we first write 
\iftoggle{include_app}{\eqref{ebf2}}{\eqref{ebf2_app}} 
as
\begin{align}
\ebf_2
&= \fbf + \ubf
\quad \text{for} \quad \fbf \defn \Vbf\vec{D}\Vbf\tran\ebf_1
\label{eq:fbf} ,
\end{align}
and recall that $\tr(\vec{D})=0$.
For the mean of $\ebf_2|\ebf_1$, we immediately have that
\begin{align}
\Exp(\ebf_2|\ebf_1)
&= \Exp(\fbf|\ebf_1) + \Exp(\ubf|\ebf_1) 
= \zero 
\end{align}
since $\Exp(\fbf|\ebf_1)=\zero$ due to \eqref{Ef}.
Also, $\Exp(\ubf|\ebf_1)=\zero$ from definition \eqref{ubf}
and $\Exp(\wbf|\ebf_2) = \vec{0}$.
This establishes \eqref{E_ebf2}.

To characterize the covariance of $\ebf_2|\ebf_1$, we write
\begin{align}
\Cov(\ebf_2|\ebf_1) 
%&= \Exp\big( \Exp\big[\fbf(\fbf)\tran\big|\ebf_1\big] \big)
&= \Cov(\fbf)
   + \Exp\big[\fbf \ubf\tran\big|\ebf_1\big] 
\nonumber\\&\quad
   + \Exp\big[\ubf \fbf\tran\big|\ebf_1\big] 
   + \Cov(\ubf|\ebf_1) 
\label{eq:Eee2} 
\end{align}
and investigate each term separately.
For the first term in \eqref{Eee2}, 
equation \eqref{Eff} and definition \eqref{fbf} imply that
\begin{align}
\Cov(\fbf|\ebf_1)
&\stackrel{N\rightarrow\infty}{=} \frac{\varepsilon_1}{N} \tr\big[\vec{D}^2\big] \Ibf_N ,
\label{eq:Cov_f} 
\end{align}
for $\varepsilon_1\defn \lim_{n\rightarrow\infty}\frac{1}{N}\sum_{n=1}^N e_{1n}^2$.
%defined in \eqref{eps1a}.
For the second and third terms in \eqref{Eee2},  equation \eqref{Ef} and definition \eqref{fbf} imply 
\begin{align}
\Exp\big[\fbf \ubf\tran\big|\ebf_1,\ubf\big]
&\stackrel{N\rightarrow\infty}{=} \zero 
\label{eq:Efu} .
\end{align}
For the last term in \eqref{Eee2}, we can use 
\eqref{Cbf} and
$\Cov(\wbf|\ebf_1)=\Ibf_M/\gamma_w$ to obtain
\begin{align}
\lefteqn{
\Cov(\ubf|\Vbf,\ebf_1)}\nonumber\\
&= \frac{1}{\alpha}\big( \Cbf + \gamma_1\Ibf_N \big)^{-1} \Cbf
\big( \Cbf + \gamma_1\Ibf_N \big)^{-1} \frac{1}{\alpha} \\
&= \frac{1}{\alpha}\Vbf\big( \Lambdabf + \gamma_1\Ibf_N \big)^{-1} \Lambdabf
\big( \Lambdabf + \gamma_1\Ibf_N \big)^{-1} \Vbf\tran \frac{1}{\alpha} \\
&= \Vbf\Sigmabf\Lambdabf^{-1} \Sigmabf \Vbf\tran 
\label{eq:covu|Ve1}
\end{align}
for 
\begin{align}
\Sigmabf 
\defn \frac{1}{\alpha} \big( \Lambdabf + \gamma_1\Ibf_N \big)^{-1} \Lambdabf
= \Ibf_N - \vec{D} 
\label{eq:Sigmabf} .
\end{align}
Then we take the expectation of \eqref{covu|Ve1} over $\Vbf$ to obtain 
\begin{align}
\lefteqn{ [\Cov(\ubf|\ebf_1)]_{n,m} }\nonumber\\
%&= \Exp [\Cov(\ubf|\Vbf)]_{n,m} 
&= \sum_{j=1}^N \frac{(\sigma_j)^2}{\lambda_j} \Exp(v_{nj} v_{mj}|\ebf_1) 
\stackrel{(a)}{=} \delta_{n-m}\frac{1}{N}\sum_{j=1}^N \frac{(\sigma_j)^2}{\lambda_j} ,
\end{align}
where $\sigma_j\defn[\Sigmabf]_{jj}$ and where (a) follows from \eqref{Ev2} and the assumed independence of $\Vbf$ and $\ebf_1$.
Consequently, 
\begin{align}
\Cov(\ubf|\ebf_1)
&= \frac{1}{N}\tr\big[\Sigmabf\Lambdabf^{-1} \Sigmabf\big] \Ibf_N  
\label{eq:Cov_u} .
\end{align}
Combining \eqref{Eee2}--\eqref{Cov_u}, we have
\begin{align}
\Cov(\ebf_2|\ebf_1) &= \varepsilon_2 \Ibf_N 
\end{align}
for
\begin{align}
\varepsilon_2 
\defn \frac{ \varepsilon_1 }{N} \tr\big[(\Ibf_N-\Sigmabf)^2\big] 
          + \frac{1}{N} \tr\big[\Sigmabf\Lambdabf^{-1} \Sigmabf\big] 
\label{eq:Eee2b} .
\end{align}
The expression for $\varepsilon_2$ can be simplified as follows.
\begin{align}
\varepsilon_2 
&= \frac{(\varepsilon_1-1/\gamma_1)}{N} \tr\big[(\Ibf_N-\Sigmabf)^2\big] 
\nonumber\\&\quad
+ \frac{1}{\gamma_1 N} \tr\big[(\Ibf_N-\Sigmabf)^2 
          + \gamma_1 \Sigmabf\Lambdabf^{-1} \Sigmabf\big] \\
&= \frac{(\varepsilon_1-1/\gamma_1)}{N} 
   \sum_{n=1}^N\left(1-\frac{\lambda_n/\alpha}{\lambda_n+\gamma_1}\right)^2
\nonumber\\&\quad
+ \frac{1}{\gamma_1 N} \tr\big[\Ibf_N-2\Sigmabf 
          + \Sigmabf\big(\Ibf_N + \gamma_1 \Lambdabf^{-1}\big) \Sigmabf\big] .
\end{align}
Leveraging \eqref{Sigmabf} to simplify the last term, we get 
\begin{align}
\varepsilon_2 
&= \frac{(\varepsilon_1-1/\gamma_1)}{N} 
   \sum_{n=1}^N\left(\frac{\lambda_n(1-1/\alpha)+\gamma_1}{\lambda_n+\gamma_1}\right)^2
\nonumber\\&\quad
+ \frac{1}{\gamma_1 N} \tr\big[\Ibf_N+(1/\alpha - 2)\Sigmabf \big] \\
&\stackrel{(a)}{=} \frac{(\varepsilon_1-1/\gamma_1)}{N} 
   \sum_{n=1}^N\left(\frac{\lambda_n(1-1/\alpha)+\gamma_1}{\lambda_n+\gamma_1}\right)^2
\nonumber\\&\quad
+ \frac{1}{\gamma_1} \left(\frac{1}{\alpha}-1\right) \\
%&\stackrel{(b)}{=} \frac{(\varepsilon_1-1/\gamma_1)}{N} 
%   \sum_{n=1}^N\left(\frac{-\lambda_n\gamma_1/\gamma_2+\gamma_1}{\lambda_n+\gamma_1}\right)^2
%+ \frac{1}{\gamma_2} \\
&\stackrel{(b)}{=} \frac{(\varepsilon_1-1/\gamma_1)}{N} 
   \sum_{n=1}^N\left(\frac{1-\lambda_n/\gamma_2}{1+\lambda_n/\gamma_1}\right)^2
+ \frac{1}{\gamma_2} ,
\end{align}
where (a) used the fact that $\tr(\Sigmabf)=N$ 
and (b) used \eqref{alpha}.

Finally, notice that the elements of $\ebf_2$ come from a sum of the form
\begin{align}
e_{2n}
&= u_n + \sum_{j=1}^N \xi_{nj} e_{j} \text{  for  }\xi_{nj}=[\Vbf\vec{D}\Vbf\tran]_{nj},
\end{align}
where, for any fixed $\ebf_1$, the elements $\{\xi_{nj}\}_{j=1}^N$ are zero mean, $O(1/N)$ variance, and uncorrelated. 
Because $u_n$ are Gaussian, it can be argued using the central limit theorem that the elements of $\ebf_2$ become Gaussian as $N\rightarrow\infty$.
Combining this result with \eqref{E_ebf2}--\eqref{Cov_ebf2}, we have that, 
%$\ebf_2|\ebf_1$ is asymptotically white and Gaussian.
\textb{given $\ebf_1$, as $N\rightarrow\infty$, the elements of $\ebf_2$ are marginally zero-mean Gaussian and uncorrelated.}

\section{\textb{Experimental Setup}}

\subsection{Multicoil MRI experiments } \label{app:multi}

In this section we detail the experimental setup for the multicoil experiments
\iftoggle{include_app}{%
in Sections~\ref{sec:example}, \textb{\ref{sec:multi_point},} and \ref{sec:multi_line}.
}{%
in Sections~IV-B, IV-C, and IV-D.
}

\subsubsection{Data}
For our multicoil experiments, we used 3T knee and brain data from fastMRI \cite{Zbontar:18}.
For knee training data, we randomly picked $28$ volumes and used the middle $8$ slices from each volume, while for knee testing data we randomly picked $4$ other volumes and used the middle $4$ slices from each.
Only non-fat-suppressed knee data was used.
For brain training data, we randomly picked $28$ volumes and used the bottom $8$ slices from each volume, while for brain testing data we randomly picked $4$ other brain volumes and used the bottom $4$ slides from each.
Only axial T2-weighted brain data was used.
Starting with the raw fastMRI data, we first applied a standard PCA-based coil-compression technique \cite{Buehrer:MRM:07,Zhang:MRM:13} to reduce the number of coils from $C=15$ to $C=8$.
Then we Fourier-transformed each fully-sampled coil measurement to the pixel domain, center-cropped down to size $368\times 368$ so that all images had the same size, and Fourier-transformed back to k-space, yielding fully sampled multicoil k-space measurement vectors $\vec{y}\full\in\Complex^{NC}$ with $N=368^2=135424$ entries.

\subsubsection{Ground-truth extraction}
To extract the ground-truth image $\vec{x}\true$ from $\vec{y}\full$, we first estimated the coil sensitivity maps $\{\vec{s}_c\}_{c=1}^C$ from the central 24$\times$24 region of k-space using ESPIRiT\footnote{We used the default ESPIRiT settings from \href{https://sigpy.readthedocs.io/en/latest/generated/sigpy.mri.app.EspiritCalib.html}{https://sigpy.readthedocs.io/en/latest/generated/sigpy.mri.app.EspiritCalib.html}.} \cite{Uecker:MRM:14}.
We then modeled $\vec{y}\full\approx\vec{A}\full\vec{x}\true$, where according to the definition of $\vec{A}$ we have
\begin{align}
\vec{A}\full
\defn \mat{\vec{F}\Diag(\vec{s}_1)\\[-2mm]\vdots\\\vec{F}\Diag(\vec{s}_C)}
= (\vec{I}_C\otimes\vec{F})\vec{S}
\text{~~for~~} 
\vec{S} \defn \mat{\Diag(\vec{s}_1)\\[-2mm]\vdots\\\Diag(\vec{s}_C)} ,
\end{align}
and we used least-squares to extract the ground-truth images as follows:
\begin{align}
\vec{x}\true 
&\defn (\vec{A}\full\herm\vec{A}\full)^+\vec{A}\full\herm\vec{y}\full \\
&= (\vec{S}\herm\vec{S})^+\vec{S}\herm(\vec{I}_c\otimes\vec{F}\herm)\vec{y}\full \\
&\stackrel{(a)}{=} \vec{S}\herm(\vec{I}_c\otimes\vec{F}\herm)\vec{y}\full \\
&= \vec{A}\full\herm\vec{y}\full
\label{eq:LS},
\end{align}
where (a) holds because ESPIRiT guarantees that, for each index pixel index $n$, the coil maps are either all zero (i.e., $[\vec{s}_c]_n=0~\forall c$) or they have a sum-squared value of one (i.e., $\sum_{c=1}^C |[\vec{s}_c]_n|^2 = 1$).

%\subsubsection{Zero region of the coil-map estimates}
%With the fastMRI knee data and the default ESPIRiT settings, it is typical for all coil map estimates to equal zero in the outer pixels of each image, because there is too little image energy at those pixels to reliably estimate the coil maps there.
%In other words, there exist pixels $n$ for which $[\vec{s}_c]_n=0~\forall c=1,\dots,C$.
%This has two main effects: i) the outer image pixels in $\vec{x}\true$ will be zero (recall \eqref{LS}) and ii) the corresponding columns of $\vec{A}$ will be zero.
%Although the zero-valued columns in $\vec{A}$ would seem to make the inverse problem more difficult, the knowledge of the coil maps can be exploited as side information that identifies certain pixels of $\vec{x}\true$ as zero-valued.

\subsubsection{Noisy, subsampled, k-space measurements}
To create the noisy subsampled k-space measurements, we started with the fully sampled fastMRI $\vec{y}\full$ from above, applied a sampling mask $\vec{M}$ of acceleration rate $R$, and added circularly symmetric complex-valued WGN $\vec{w}$ to obtain $\vec{y}$.
\textb{The sampling densities that generated the 2D point and 2D line masks were obtained from the \texttt{genPDF} function of the SparseMRI package\footnote{\url{http://people.eecs.berkeley.edu/~mlustig/Software.html}} with the same settings used in the VDAMP code\footnote{\url{https://github.com/charlesmillard/VDAMP}}, except that the 2D line masks used a 1D sampling density while the 2D point masks used a 2D sampling density.}
The variance on the noise was adjusted to reach a desired signal-to-noise ratio (SNR), where $\text{SNR}\defn \|\vec{y}-\vec{w}\|^2/\|\vec{w}\|^2$.
%$\text{SNR}\defn \|\vec{Ax}\true\|^2/\|\vec{y}-\vec{Ax}\true\|^2$.
With multicoil data, we used masks with a fully sampled central $24\times 24$ autocalibration region%
\iftoggle{include_app}{, as in \figref{mask}(b)\textb{-(c)},}{}
to facilitate the use of ESPIRiT for coil estimation.

\subsubsection{Algorithm details}
For D-GEC, we used the 2D Haar wavelet transform of depth $D=4$, giving $L=13$ wavelet subbands.
When evaluating $\vec{f}_1$, we use \textb{$150$} CG iterations \textb{in 
\iftoggle{include_app}{\secref{example}}{Section~IV-B} 
and $10$ in Sections
\iftoggle{include_app}{\ref{sec:multi_point} and \ref{sec:multi_line}}{IV-C and IV-D}}.
Also, we use the damping scheme from \cite{Sarkar:ICASSP:21} with a damping factor of $0.3$ and run the D-GEC algorithm for $20$ iterations.  
\textb{For the experiments in 
\iftoggle{include_app}{\secref{example}}{Section IV-B}, 
we used the auto-tuning scheme from \cite{Fletcher:NIPS:17} to adjust $\vec{\gamma}_1$ and $\vec{\gamma}_2$.}

\subsubsection{Denoiser details}
As described in 
\iftoggle{include_app}{\secref{denoising}}{Section IV-A}, 
our corr+corr denoiser was built on bias-free DnCNN \cite{Mohan:ICLR:20}.
For the multicoil experiments, the images were complex-valued and so DnCNN used two input and output channels: one for the real part and one for the imaginary part.
When extending DnCNN to corr+corr, we added a single noise channel, since we assumed that the real and imaginary parts of the noise had the same noise statistics.
Prior to training, each ground-truth image was scaled so that the 98th percentile of its pixel magnitudes equaled $1$.
While training, we used standard deviations $\{1/\sqrt{\gamma_\ell}\}_{\ell=1}^L$ drawn independently from a uniform distribution over a specified interval $[\text{SD}_{\min},\text{SD}_{\max}]$.
Despite the use of a bias-free DNN, we found that it did not work well to train a single denoiser over a very wide range of SDs, and so we trained five different denoisers, each over a different range of subband SDs: $[0,10/255]$, $[10/255,20/255]$, $[20/255,50/255]$, $[50/255,120/255]$, and $[120/255,500/255]$.  
In each case, we used the training procedure described in 
\iftoggle{include_app}{\secref{denoising}}{Section IV-A}, 
with \textb{$\ell_2$} loss, $20$ epochs, a minibatch size of $128$, the Adam optimizer, and a learning rate that \textb{started} at $10^{-3}$ and was reduced by a factor of $2$ at the $8$th, $12$th, $14$th, $16$th, $18$th, and $19$th epochs. 
The denoisers were trained using $64\times 64$ image patches, of which we obtained $645\,792$ from the training images using a stride of $10\times 10$ and standard data-augmentation techniques like rotation and flipping. 
Although we cannot guarantee that the test images will be scaled in the same way, this is not a problem because bias-free DnCNN obeys $\vec{f}_2(\alpha\vec{u},\alpha\vec{N})=\alpha\vec{f}_2(\vec{u},\vec{N})$ for all $\alpha>0$.
It took approximately $24$ hours to train each denoiser on a workstation with a single NVIDIA RTX-A6000 GPU.

\subsection{Single-coil MRI experiments} \label{app:single}

In this section we detail the experimental setup for the single-coil experiments used in 
\iftoggle{include_app}{\secref{single}}{Section IV-E}.

\subsubsection{Data}
For our single-coil experiments, we used MRI images from the Stanford 2D FSE dataset \cite{Ong:ISMRM:18}.
We used the same train/test/validation split from \cite{Metzler:ICASSP:21}: for testing, we used the $10$ images shown in 
\iftoggle{include_app}{\figref{test_images}}{Fig.~3},
for training we used $70$ other images, and for validation we used $8$ remaining images.
All images were real-valued and $352\times 352$.
For each ground-truth image, the fully sampled k-space data was created via $\vec{y}\full=\vec{Fx}\true$ using 2D discrete Fourier transform $\vec{F}$.

\subsubsection{Noisy, subsampled, k-space measurements}
To create the noisy subsampled k-space measurements, we started with the full sampled Stanford $\vec{y}\full$ from above, applied a \textb{2D point} sampling mask $\vec{M}$ of acceleration rate $R$, and added circularly symmetric complex-valued WGN to obtain $\vec{y}$.
The variance on the noise was adjusted to reach an SNR of $45$~dB.
%A variable-density mask was needed to compare to VDAMP and D-VDAMP.
With single-coil data, we do not need a fully sampled central autocalibration region and so we use masks similar to that shown in 
\iftoggle{include_app}{\figref{mask}(a)}{Fig.~1(a)}.

\subsubsection{Algorithm details}
For D-GEC, we used the 2D Haar wavelet transform of depth $D=4$, giving $L=13$ wavelet subbands.  
When evaluating $\vec{f}_1$, we used $10$ CG iterations.
Also, we used the auto-tuning scheme from \cite{Fletcher:NIPS:17} to adjust $\vec{\gamma}_1$ and the damping scheme from \cite{Sarkar:ICASSP:21} with a damping factor of $0.5$.
We ran the D-GEC algorithm for a maximum of $200$ iterations.
%with the following early-stopping rule: after $60$ iterations, we stopped the algorithm if the mean value of the entries in $1/\vec{\gamma}_1$ increases for $4$ consecutive iterations.

\subsubsection{Denoiser details}
As described in 
\iftoggle{include_app}{\secref{denoising}}{Section IV-A}, 
our corr+corr denoiser was built on bias-free DnCNN \cite{Mohan:ICLR:20}.
For the single-coil experiments, the images were real-valued and so the standard DnCNN uses one input and output channel.
When extending that DnCNN to corr+corr, we added a single noise channel.
The training of the denoiser was identical to that used in the multicoil case, described in \appref{multi}.

}{}

\end{document}